\documentclass[sigconf,nonacm]{acmart}

\usepackage{booktabs}
\usepackage{multirow}
\usepackage{tabularx}
\usepackage{amsmath}
\usepackage[ruled,vlined,linesnumbered]{algorithm2e}
\usepackage{algpseudocode}
\usepackage{subcaption}
\usepackage{xcolor}

\usepackage{listings}
\usepackage{xcolor}

\definecolor{etalePurple}{HTML}{4B2E83}
\definecolor{etaleGold}{HTML}{85754D}
\definecolor{etaleComment}{HTML}{6B6B6B}
\definecolor{etaleBg}{HTML}{FBFAFD}

\lstdefinestyle{etalecpp}{
  language=C++,
  basicstyle=\ttfamily\footnotesize,
  keywordstyle=\color{etalePurple}\bfseries,
  commentstyle=\color{etaleComment}\itshape,
  stringstyle=\color{etaleGold},
  numberstyle=\tiny\color{etaleComment},
  backgroundcolor=\color{etaleBg},
  frame=single,
  rulecolor=\color{etalePurple},
  framesep=6pt,
  breaklines=true,
  showstringspaces=false,
  columns=fullflexible,
  keepspaces=true,
  morekeywords={uint32_t,uint64_t,int,float,bool,constexpr,struct,using,size_t,volatile,inline,return,const}
}

\newtheorem{proposition}{Proposition}
\newtheorem{lemma}{Lemma}
\theoremstyle{definition}

\newtheorem{corollary}{Corollary}
\theoremstyle{remark}
\newtheorem{remark}{Remark}

\usepackage{pifont}
\newcommand{\cmark}{\ding{51}}
\newcommand{\xmark}{\ding{55}}

\begin{document}

\title{ETALE: Evolving Topology with Accelerated Lock-free Execution for Dynamic Graph ANN Search on GPUs}

\author{Dongfang Zhao}
\affiliation{%
  \department{Tacoma School of Engineering and Technology}
  \institution{University of Washington}
  \city{}
  \country{}
}
\email{dzhao@uw.edu}

\keywords{Approximate nearest neighbor search, vector database, GPU, dynamic
  graph index, lock-free data structure}

\begin{abstract}
  Graph-based approximate nearest neighbor (ANN) indexes on the GPU are originally built for static collections and must reconstruct the affected window to absorb any update, while the dynamic graph indexes that update incrementally run on the CPU and cannot exploit GPU parallelism.
  This paper presents a new ANN index, namely ETALE (Evolving Topology with Accelerated Lock-free Execution), which is among the first GPU-native graph ANN indexes to support streaming insertion and deletion without a global rebuild.
  The core of ETALE is a lock-free copy-on-write slab graph whose deletion state and adjacency share a single atomically published word, which yields a provable deletion-monotonicity invariant together with a bounded reclaim of GPU memory that is sublinear in accumulated deletions.
  We have implemented ETALE in CUDA and evaluate it on five diverse multimodal datasets against four state-of-the-art indexes: Tagore (GPU static, SIGMOD'26), DIGRA (CPU dynamic, SIGMOD'26), CAGRA (GPU static, ICDE'24), and HNSW (CPU dynamic, TPAMI'20).
  Under continuous churn, ETALE maintains the index in hundreds of milliseconds per round at recall above $0.95$, which is $4.8$--$8.8\times$ faster than CAGRA's per-window rebuild, $1.8$--$2.5\times$ faster than the more recent Tagore, and $3.3$--$147\times$ faster than the CPU dynamic indexes.
  In addition, the memory footprint of ETALE stays bounded where tombstone-only systems grow indefinitely.
\end{abstract}

\maketitle

\section{Introduction}
\label{sec:intro}

Approximate nearest neighbor search (ANNS) over embedding vectors is a core
primitive behind many application domains, such as recommendation
systems~\cite{rying_kdd18} and retrieval-augmented generation
(RAG)~\cite{plewi_neurips20}. Across these settings, graph-based indexes such
as HNSW~\cite{hnsw}, NSG~\cite{cfu_vldb19}, and DiskANN~\cite{diskann} deliver
the desired accuracy-latency trade-off and have become a popular choice in
production vector stores~\cite{jwang_sigmod21}. The embedding models that
produce these vectors already run on GPUs, which gives a strong incentive to
keep the index on the GPU as well. A GPU-resident index sits next to the model
that generates the vectors, which removes the host transfer that a CPU index
pays on every batch and places graph search and construction on the same
parallel hardware.

However, keeping a graph index on the GPU and ensuring it up to date are in tension. 
Because a live corpus adds and removes documents continuously, an index that serves it
must absorb that stream of insertions and deletions or fall out of date. The
dominant GPU graph indexes cannot: they are built for static collections.
CAGRA~\cite{cagra}, the state-of-the-art GPU index, fixes every node's neighbor
list in a single batch-oriented construction pass, so it has no incremental
update path and absorbs a change only by rebuilding the affected window from
scratch. The indexes that do update incrementally sit on the other hardware:
FreshDiskANN~\cite{asing_arxiv21}, SPFresh~\cite{yxu_sosp23}, and
DIGRA~\cite{digra} show that streaming insertion and deletion are practical,
yet each maintains its graph on the RAM or SSD with CPUs and leaves the GPU's
parallelism unused. The cost of resolving the tension by rebuilding is large.
On SIFT under a $1\%$ churn workload, Figure~\ref{fig:teaser} shows CAGRA
spending about $1.8$\,s per round on the rebuild, whereas the proposed ETALE
index that maintains the graph incrementally finishes the same change in a
fraction of that time at matched recall. This gap follows from how each index
is built rather than from how it is tuned.

\begin{figure}[t]
  \centering
  \includegraphics[width=\linewidth]{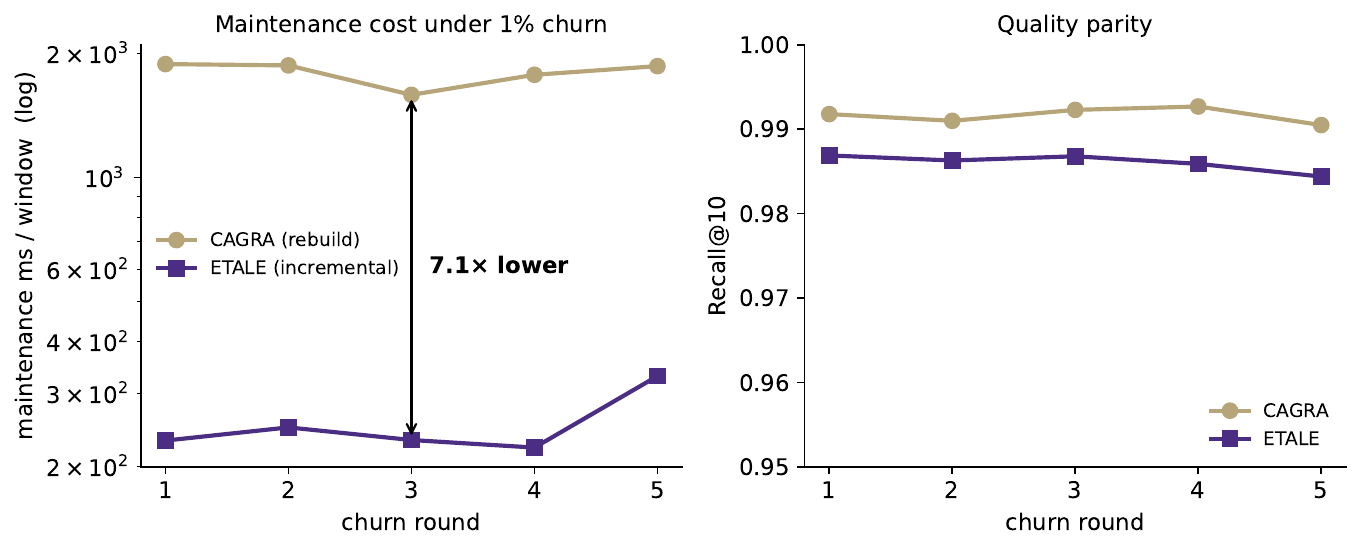}
  \caption{Maintenance under $1\%$ churn on SIFT. Left: ETALE maintains the
    index incrementally at about $250$\,ms per round, $7.1\times$ cheaper than
    CAGRA's per-round rebuild. Right: the saving holds at quality parity, with
    Recall@10 of the two systems coinciding within a few thousandths across rounds.}
  \label{fig:teaser}
\end{figure}

The technical challenge to realize GPU dynamism for ANNS lies in concurrent structural
mutation of a graph without locks. 
A node's update is not a single-word change
since it rewrites a whole adjacency list that no single atomic compare-and-swap
on a mutable array can publish, while fine-grained locking serializes exactly
the thousands-of-threads parallelism the GPU exists to provide. Recent GPU
efforts that add streaming insertion~\cite{ysun_cikm24} do so for an
inverted-file index, where an insertion appends a vector to a cluster list
rather than rewriting a node's adjacency, and they support insertion alone
without a deletion path. What has been missing is an index that mutates a
proximity \emph{graph} online on the GPU, supports deletion as well as
insertion, and whose deletion is provably monotone and whose footprint is
provably bounded.

This paper presents ETALE (Evolving Topology with Accelerated Lock-free
Execution), a graph ANN index that lives on the GPU and updates itself as data
arrives or leaves. It edits the graph in place on the GPU while queries continue to run,
which brings the cost of a stream of insertions and deletions to a fraction of
a rebuild. Its memory stays bounded as deletions
accumulate, where a conventionally tombstone-only index would grow indefinitely. A deleted
item, moreover, is guaranteed never to reappear in a result, which is a
deletion-monotonicity property that ETALE establishes by proof rather than by
heuristic. To our knowledge, ETALE is among the first GPU-native graph indexes
to support online streaming insertion and deletion and to back its deletion
with a monotonicity guarantee.

The core design of ETALE rests on a lock-free copy-on-write slab structure.
Each node's deletion state, degree, and adjacency pointer are packed into a
single $64$-bit word. The node's neighbors live in a separate immutable slab
that this word names. An update allocates a fresh slab, builds the new neighbor
set into it, and publishes the result with one atomic compare-and-swap on that
word. Publication of a full adjacency rewrite thus reduces to a single
compare-and-swap, which makes concurrent online insertion and deletion
lock-free without serializing the thousands of threads a GPU runs. Because the
deletion bit occupies the very word every update swaps, no later publication
can undo a deletion. Removal is therefore monotone by construction. On this
structure, we demonstrate a deletion-monotonicity invariant together with a
bounded reclaim of GPU memory. In addition, we confirm empirically that the
index is both fast and memory-stable.

In summary, this paper makes the following three contributions.
\begin{itemize}
  \item We design a new graph data structure for streaming ANNS on GPUs, namely ETALE.
        ETALE is a lock-free copy-on-write slab graph with a packed $64$-bit node
        reference, in which a full adjacency rewrite is published by a single
        compare-and-swap and the deletion bit is inlined into that same atomic word.
        The structure enables fine-grained online insert and delete rather than
        phase-structured batch updates. (Section~\ref{sec:structure})

  \item We establish the correctness and cost properties of ETALE. We prove a
        deletion-monotonicity invariant, reclaim safety, and a footprint bounded by the
        live set under periodic reclaim; and we analyze the reclaim cost, which is
        sublinear in accumulated tombstones under a uniform-placement model and is
        ceilinged by the cost of rebuilding the live set. (Section~\ref{sec:theory})

  \item We extensively evaluate ETALE with a full CUDA implementation and
        experimentation on five datasets against four state-of-the-art baselines, such
        as CAGRA (GPU static), Tagore (GPU static), DIGRA (CPU dynamic), and HNSW (CPU
        dynamic) under streaming churn. ETALE lowers maintenance cost by
        $4.8$--$8.8\times$ over CAGRA and $3.3$--$147\times$ over the CPU baselines at
        matched recall while keeping memory footprint bounded. (Section~\ref{sec:eval})
\end{itemize}


\section{Related Work}
\label{sec:related}


\subsection{Graph-Based ANN Search}
\label{sec:related-graph}

Graph-based methods give the desired accuracy-latency trade-off among ANN
indexes and underlie a lot of production vector search. A proximity graph connects
each vector to a set of near neighbors. A query walks the graph greedily from
an entry point toward its target, while a bounded beam of candidates keeps the
search from stalling in a local basin. HNSW~\cite{hnsw} layers navigable
small-world graphs~\cite{ymalk_is14} into a hierarchy that shortens the walk,
while NSG~\cite{cfu_vldb19} and Vamana within DiskANN~\cite{diskann} build a
single flat graph whose edges are chosen to stay navigable at a bounded degree.
FANNG~\cite{bharw_cvpr16} and clustering-based constructions~\cite{jmuno_pr19}
pursue the same trade-off through different edge-selection rules. ETALE adopts
this flat-graph design and the slack-pruned edge selection of Vamana, however re-crafts the slab-based graph structure and associated algorithms to better support GPU execution.

The quality of a graph index rests on how its edges are built and pruned.
NN-Descent~\cite{wdong_www11} and EFANNA~\cite{cfu_arxiv16} bootstrap an
approximate $k$-nearest-neighbor (kNN) graph efficiently, and Relative
NN-Descent~\cite{nono_mm23} and ParlayANN~\cite{mmano_ppopp24} parallelize the
construction deterministically. Recent work continues to refine the
construction and pruning that set a graph's reachable
recall~\cite{syang_vldb25, xzhao_vldb23, rqiu_sigmod25}, and a broad survey
compares the resulting designs~\cite{mwang_vldb21}. 
ETALE reuses the RobustPrune rule of DiskANN to bound out-degree at the slab capacity; however, the algorithms are adapted for GPU hardware and characteristics.

Graph methods displaced earlier families that ETALE does not build on but
shares a problem with. Locality-sensitive hashing (LSH) trades exactness for
sublinear query time~\cite{pindy_stoc98, mdata_socg04, qlv_vldb07,
  aando_neurips15}, and space-partitioning trees index low-dimensional data
well~\cite{jbent_cacm75, mmuja_tpami14}. Quantization compresses the vectors
themselves rather than the graph over them~\cite{sift, tge_tpami13,
  jgao_sigmod24, rguo_icml20, jmart_eccv18}, and libraries such as
Faiss~\cite{mdouz_arxiv24} combine these techniques in practice.

\subsection{Dynamic Graph ANN Indexes on CPUs}
\label{sec:related-dynamic}

A separate line of work maintains graph indexes incrementally, though on the
CPU or SSD rather than the GPU. FreshDiskANN~\cite{asing_arxiv21} and
SPFresh~\cite{yxu_sosp23} update a disk-resident graph in place as vectors
arrive and depart, DIGRA~\cite{digra} maintains a dynamic graph under
range-filtered queries, and a recent set of systems extends in-place streaming
maintenance to new settings~\cite{hxu_arxiv25, ypan_bigdata23}. These indexes
establish that incremental maintenance is feasible on host and disk hardware,
but none of them runs the maintenance on the GPU.

The hard part of maintenance is keeping the graph navigable as deletions remove
the nodes that paths route through. A lazy scheme tombstones a deleted node and
leaves its edges in place, which keeps deletion inexpensive but causes stale edges
to accumulate and memory to grow indefinitely. A repair-based scheme instead
rebuilds local connectivity around a deleted node. Several systems develop this
route through monotonic search-path repair~\cite{dliu_vldb25}, topology-aware
localized updates~\cite{syu_vldb25}, adaptive awareness of the changing
graph~\cite{jruan_kdd25}, and robustness to gradual content
drift~\cite{dbara_iccv23}. ETALE takes the repair-based route and additionally
reclaims the space that tombstones hold, which the CPU systems above leave to a
separate compaction pass.

The filtered queries that DIGRA targets sit at the boundary of this line.
Filtered-DiskANN~\cite{sgoll_www23} attaches attribute predicates to graph
edges, while ACORN~\cite{lpate_sigmod24}, VBASE~\cite{qzhan_osdi23}, and
AnalyticDB-V~\cite{cwei_vldb20} integrate vector search with structured or
relational predicates, and Starling~\cite{mwang_sigmod24} optimizes the disk
layout such hybrid queries read. ETALE does not support attribute filtering,
which scopes its comparison with DIGRA to the unfiltered maintenance both
systems perform.

\subsection{GPU ANN Indexes}
\label{sec:related-gpu}

Most GPU graph indexes build the graph once and query it on the GPU without an
incremental update path. CAGRA~\cite{cagra} constructs a high-quality graph
with a batch pipeline and rebuilds to absorb any change.
SONG~\cite{wzhao_icde20} and the GPU proximity-graph index of Yu et
al.~\cite{yyu_icde22} accelerate static graph search and construction,
BANG~\cite{kvenk_tbd25} answers billion-scale queries from a single GPU, and
further work scales static GPU indexing across larger inputs and
machines~\cite{hwang_cikm21, yzhu_sigmod24}.
Tagore~\cite{zli_sigmod25} belongs to this static line and accelerates the
construction of the same graph class, reducing build time without adding an
incremental insert or delete path. None of these indexes maintains the graph
under a stream of deletions, each absorbing an update by rebuilding.

A smaller line of work adds streaming updates on the GPU, and it is the closest
prior work to ETALE. RTAMS-GANNS~\cite{ysun_cikm24} maintains a GPU
inverted-file (IVF) index under concurrent insertion and search, appending new
vectors to a cluster's posting list and overlapping the insertion kernel with
the search kernel. It supports insertion only, with no deletion path and no
recall report. SIVF~\cite{dzhao_hpdc26} adds in-place deletion to a GPU IVF
index through conflict-free slab allocation, evicting expired vectors without
the host transfers that stall a static layout. Both index by inverted file
rather than by proximity graph, where an update edits a flat posting list and
never rewrites a node's adjacency. The concurrent adjacency-rewrite problem of
Section~\ref{sec:barrier} thus stays outside their scope. That line of work
sets graph-based methods aside as harder to maintain online. ETALE enters
exactly this region. A
full adjacency rewrite is published by a single compare-and-swap per node,
deletion proceeds by connectivity repair, and the structure carries a proof that
deletion is monotone and that reclaim is safe and bounded. The lock-free
publication that makes this online mutation possible rests on the GPU's relaxed
memory model~\cite{jalgl_asplos15}, and the pool that recycles retired slabs
follows the parallel-primitive discipline of shared-memory
toolkits~\cite{gblel_spaa20}.

\section{ETALE Design}
\label{sec:structure}



\subsection{Problem Formulation}
\label{sec:barrier}

The core problem ETALE aims to address is to allow many GPU threads to read and
update one graph of embedding vectors (i.e., nodes) at the same time, without
mutual exclusion. The threads fall into two classes: (i) readers that traverse
adjacency during search, and (ii) writers that rewrite adjacency during
insertion and deletion. Both classes run concurrently against the same nodes,
and neither should be made to wait for the other.

Correctness requires that each update to a node's adjacency take effect as a
single transaction, which is observed atomically by every concurrent reader. Suppose instead that a reader can observe a node while a writer is partway through
replacing its neighbor list. The reader then follows a mixture of old and new
neighbors that the graph never actually contained, in which case its search is led astray. 
A reader must therefore see a node's adjacency either entirely before
an update or entirely after it, i.e., in a transaction.

Although concurrency control is a well-studied topic in the database literature, committing a proximity-graph neighbor update under concurrent ANN search is not served well by standard techniques on a GPU. 
Lock-based (passive) concurrency control serializes conflicting
updates through a lock, but GPU threads execute in warps of 32 lanes in
lockstep.
Therefore, a lane that blocks on a lock holds up the other 31 lanes in its
warp and a high-degree node updated by many threads at once forces those
updates to serialize behind one lock. 
Lock-free (optimistic) concurrency control commits an
update with an atomic compare-and-swap (CAS), but a CAS commits a change to one
location only, whereas a neighbor-list rewrite must replace many entries at
once. 
The problem is thus to make a full neighbor-list rewrite to take effect as a
single atomic transaction, while preserving the thread parallelism a GPU index
depends on.

\subsection{Index Structure}
\label{sec:defn}

ETALE represents a directed proximity graph over a fixed identifier space. We
write $R$ for the slab capacity, i.e., the maximum out-degree of any node, $C$ for
the size of the identifier space, and $m$ for the vector dimension. The
identifier space is $\mathcal{U} = \{0, 1, \dots, C-1\}$, the slab-identifier
space is $\mathcal{S} = \{0, 1, \dots, P-1\}$ where $P$ is the number of slabs
in the pool, and a deletion flag takes a boolean value in $\{0, 1\}$.
We start by defining the ETALE graph formally, as follows.

\begin{figure*}[t]
  \centering
  \includegraphics[width=\textwidth]{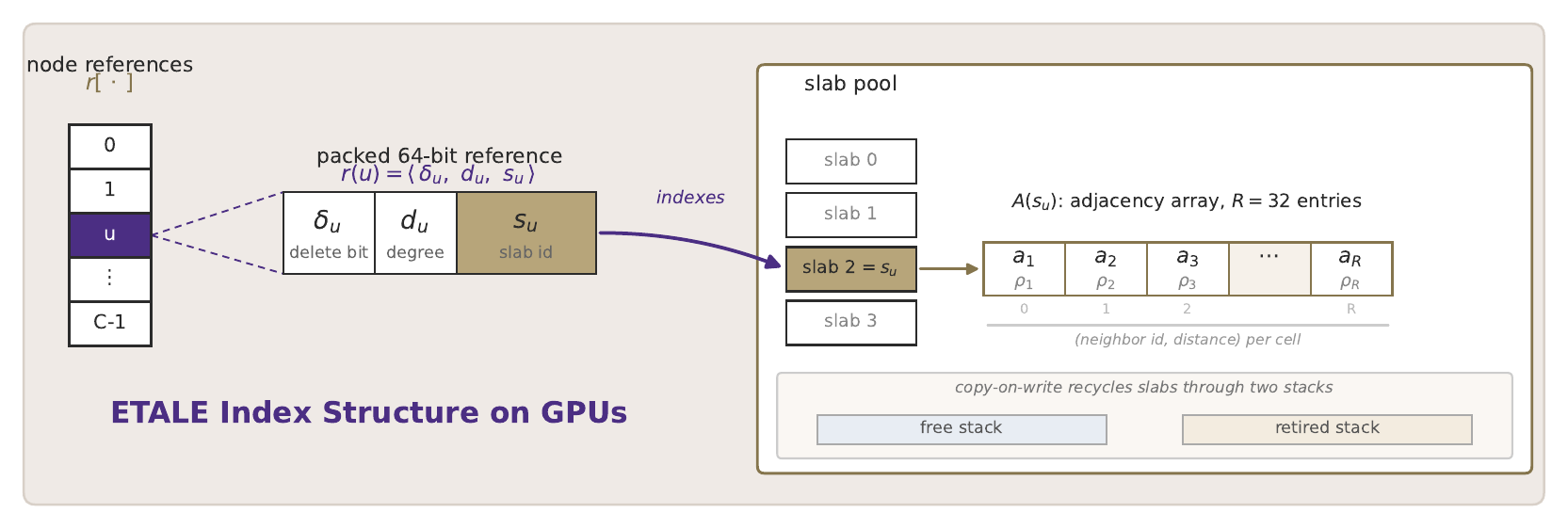}
  \caption{The ETALE index structure. Each identifier $u$ indexes a packed
    $64$-bit reference $r(u)=\langle\delta_u, d_u, s_u\rangle$, whose slab field
    $s_u$ names a slab in the pool. The slab holds $u$'s adjacency $A(s_u)$ as an
    array of up to $R$ entries, each a neighbor identifier paired with its
    distance. Copy-on-write publication draws fresh slabs from the free stack and
    returns retired ones through the retired stack.}
  \label{fig:structure}
\end{figure*}

\begin{definition}[ETALE graph]
  \label{def:graph}
  An ETALE graph is a tuple $G = (\mathcal{V}, r, A, x)$, where each component is defined as follows.
  \begin{itemize}
    \item $\mathcal{V} \subseteq \mathcal{U}$ is the set of live nodes.
    \item $r$ is the reference map, which assigns each identifier $u \in \mathcal{U}$ a node reference $r(u) = \langle \delta_u, d_u, s_u \rangle$,
          a triple whose components are a deletion flag $\delta_u \in \{0, 1\}$,
          an out-degree $d_u \in \{0, 1, \dots, R\}$,
          and a slab identifier $s_u \in \mathcal{S}$.
    \item $A$ is the slab map assigning each slab identifier $s \in \mathcal{S}$ an ordered sequence of $R$ entries,
          $A(s) = \big( (a_1, \rho_1), \dots, (a_R, \rho_R) \big)$,
          where each entry $(a_i, \rho_i)$ is a neighbor identifier $a_i \in \mathcal{U}$ paired with its distance $\rho_i \in \mathbb{R}$ to the node that owns the slab.
    \item $x$ is the vector map, which assigns each identifier $u \in \mathcal{U}$ its vector $x(u) \in \mathbb{R}^{m}$.
  \end{itemize}
  The out-neighbors of a live node $u$ are the neighbor identifiers in the first $d_u$ entries of its slab,
  $N(u) = (a_1, \dots, a_{d_u})$ where $(a_i, \rho_i) = A(s_u)[i]$.
  A node $u$ is a tombstone when $\delta_u = 1$.
  A tombstone is excluded from query results, while its slab is retained so that traversal may still relay through it.
  However, a tombstone's slab is eventually reclaimed and returned to the free pool when no live nodes reference it, at which point the tombstone is removed from the graph entirely.
\end{definition}

Figure~\ref{fig:structure} shows the three maps of Definition~\ref{def:graph}
laid out as ETALE holds them on the GPU. The representation reflects three
choices made for concurrent mutation. The first is to use a node's identifier
directly as its array index. A search that expands a neighbor reads that
neighbor's reference and vector by indexing on its identifier alone with no
lookup table on the path between an identifier and its data. 

The second choice is to place a node's deletion flag, degree, and slab
identifier together in one reference that the hardware accesses atomically. A
search reads all three in a single load and never sees them disagree. An update
commits all three at once. The deletion flag is therefore set, cleared, or
carried forward by the same commit that changes the adjacency rather than by a
second write to a separate structure.

The third choice is to hold a node's neighbors in a slab named by the
reference rather than inline in the node itself. An update can then prepare a
node's new neighbor list in a separate slab and substitute it by changing one
reference, which turns a change to a whole neighbor list into a change to a
single value. A search in progress reads either the old slab or the new one,
but never a list caught mid-edit. The slab holds up to $R$ neighbors, with $R$
fixed at the warp width of 32. A single warp reads a node's neighbors in one
coalesced step.

\subsection{Copy-on-write Publication}
\label{sec:cow}
Every ETALE update reaches the graph through one mechanism, the copy-on-write (COW) publication of Algorithm~\ref{alg:cow}.
A node's neighbors are changed not by editing its slab in place, but by building a fresh slab and redirecting the node's reference to it in a single atomic step.
A writer first snapshots the node's current reference,
then allocates an unused slab and writes the new neighbor sequence into the slab,
at which point the slab cannot be acquired by any other thread.
The writer issues a memory fence to order the slab's contents before any reference that points to them.
After that, the writer attempts a compare-and-swap (CAS) that moves the node's reference from the snapshot to a value identifying the fresh slab.
A successful swap then retires the old slab.
A failed swap implies that another writer published the node first, and the writer gives up this slab and retries from a new snapshot.
The reference carries forward the deletion flag of the snapshot; therefore, publication never revives a tombstone.

\begin{algorithm}[t]
  \caption{$\textsc{Publish}(u, N)$: install neighbor sequence $N$ for node $u$}
  \label{alg:cow}
  \SetKwInOut{Input}{Input}
  \SetKw{KwAnd}{and}
  \Input{node $u$, new neighbor sequence $N$ with $|N| \le R$}
  \Repeat{success \KwAnd attempt budget not exhausted}{
    $\langle \delta, d, s\rangle \gets r(u)$ \tcp*{atomic snapshot}
    $s' \gets \textsc{AllocSlab}()$\;
    $A(s') \gets N$ \tcp*{write fresh, still private}
    $\textsc{Fence}()$ \tcp*{content visible before reference}
    $r' \gets \langle \delta, |N|, s'\rangle$ \tcp*{carry the deletion flag}
    \uIf{$\textsc{CAS}\big(r(u),\, \langle\delta,d,s\rangle,\, r'\big)$}{
      $\textsc{RetireSlab}(s)$ \tcp*{won: free the old slab}
      success $\gets$ \textbf{true}\;
    }
    \Else{
      $\textsc{RetireSlab}(s')$ \tcp*{lost: discard and retry}
    }
  }
\end{algorithm}

This mechanism makes readers wait-free and writers lock-free, a property that
Section~\ref{sec:eval-concurrent} will demonstrate, i.e., in scenarios where queries run during concurrent updates. 
A reader's single atomic load yields a slab
identifier that names an immutable sequence, so the neighbor list it reads
cannot change beneath it. 
As a result, no torn read is possible and a writer always makes
progress. 
A failed swap witnesses another writer's success, which implies that the
system as a whole advances at every step.

The slabs come from a preallocated pool with two index stacks, a \emph{free stack} of
slab identifiers available for allocation and a \emph{retired stack} of slab
identifiers awaiting reclamation. $\textsc{AllocSlab}$ pops one identifier from
the free stack and $\textsc{RetireSlab}$ pushes one onto the retired stack.
Within a single kernel, the free stack is only popped and the retired stack is
only pushed. Neither stack is subject to a concurrent push and pop, and a
single atomic counter per stack serializes the threads that touch it. A retired
slab is not returned to the free stack during the kernel that retires it. The
host moves the retired stack onto the free stack at the boundary between
kernels, after the threads of the retiring kernel terminate and any
reader of a retired slab finishes. This separation of retirement from reuse
reclaims slabs without per-slab reference counting, which is an epoch discipline whose safety will be established in Section~\ref{sec:theory}.

\paragraph{Complexity Analysis}
A single $\textsc{Publish}$ runs in $O(R)$, linear in the slab capacity $R$
and constant in everything else. The snapshot, the allocation, the fence, the
CAS, and the retirement are each constant-time. The one $O(R)$ step is writing
the new neighbor sequence into the slab, which a warp of $R$ lanes performed in a
single coalesced pass. An uncontended publication therefore completes in a
constant number of parallel steps over that one slab. Under contention a
publication may retry, each retry rebuilding one slab at $O(R)$. A retry occurs
only after a competing publication on the same node has just committed. Each failed attempt then maps to a distinct successful one elsewhere. The total
number of committed publications is unchanged by retries. The amortized cost of
installing a node's neighbor list thus stays at $O(R)$.

\subsection{Update Protocols}
\label{sec:ops}

ETALE supports three updates, insertion, deletion, and reclamation.
Each computes a node's new neighbor sequence by combination of beam search, RobustPrune, and local connectivity repair, and finally commits it through $\textsc{Publish}$.
The three protocols differ only in which nodes they republish and in how they compute the new sequences.

\subsubsection{Insertion}
\label{sec:insert}

Insertion adds a node and connects it to the graph in both directions. It uses
two subroutines. $\textsc{BeamSearch}(v)$ takes a query vector $v$ and returns
the set of candidate nodes visited by a greedy search from a fixed set of entry
points. $\textsc{RobustPrune}_\alpha(\mathcal{C}, R)$ takes a candidate set
$\mathcal{C}$ and a degree bound $R$, and returns at most $R$ neighbors
selected from $\mathcal{C}$ under the slack-$\alpha$ pruning rule
of~\cite{diskann}, which keeps a near candidate only when no already-selected
neighbor dominates it. Algorithm~\ref{alg:insert} states the procedure.

The forward edges are selected once. A beam search from the new node gathers
candidates and RobustPrune reduces them to at most $R$ forward neighbors.
These become the new node's own slab. This first publication is uncontended since no other thread can yet reach the node.

The reverse edges then require a publication per forward neighbor, as an
edge from the new node alone does not make the new node reachable. For each
forward neighbor $w$, the new node is offered as a neighbor of $w$. A neighbor
$w$ below capacity gains the edge directly while a neighbor at capacity is
re-pruned over the union of its current neighbors and the new node; 
As a result, its degree never exceeds $R$. 
Nodes inserted in the same batch are absent from one another's beam searches. 
A refine pass runs afterward and re-offers each batch member's edges, which
restores the visibility the batch missed. 
A node whose neighbor sequence is unchanged by this pass publishes a no-op.

\paragraph{Complexity Analysis}
An insertion performs one beam search and at most $R+1$ publications, the node's
own slab and one reverse edge per forward neighbor. The beam search dominates
the cost. Each publication is the $O(R)$ operation analyzed in
Section~\ref{sec:cow}. The reverse publications are independent across forward
neighbors and proceed concurrently, each contending only on its own target.

\begin{algorithm}[t]
  \caption{$\textsc{Insert}(v)$: insert vector $v$ as a new node}
  \label{alg:insert}
  \SetKwInOut{Input}{Input}
  \Input{vector $v$ assigned a fresh identifier}
  $\mathcal{C} \gets \textsc{BeamSearch}(v)$ \tcp*{candidate neighbors}
  $F \gets \textsc{RobustPrune}_\alpha(\mathcal{C}, R)$ \tcp*{forward neighbors}
  $\textsc{Publish}(v, F)$ \tcp*{node's own slab, uncontended}
  \ForEach{$w \in F$}{
    $\langle\delta,d,s\rangle \gets r(w)$\;
    \uIf{$d < R$}{
      $N \gets A(s) \cup (v,\, \|x_v - x_w\|)$\;
    }
    \uElse{
      $N \gets \textsc{RobustPrune}_\alpha\!\big(A(s) \cup \{(v,\cdot)\},\, R\big)$\;
    }
    $\textsc{Publish}(w, N)$ \tcp*{reverse edge, lock-free}
  }
\end{algorithm}

\subsubsection{Deletion with connectivity repair}
\label{sec:delete}

Deletion removes a node from the vector database while keeping the graph navigable
around it. It reuses $\textsc{RobustPrune}_\alpha$ from insertion and adds no
new subroutine. Algorithm~\ref{alg:delete} states the procedure.

The node is first tombstoned by a single compare-and-swap (CAS) that sets its deletion flag while preserving its slab and degree.
A tombstoned node is excluded from query results because search admits to its output only nodes whose sampled reference has flag $0$.
The slab itself is retained so that a search already in progress can still relay through the node to reach others.
This step changes only the flag and remains a direct CAS rather than a full $\textsc{Publish}$.

Tombstoning alone would strand every path that routed through the deleted node, which is why the algorithm repairs the local connectivity around it.
For each live neighbor $a$ of the deleted node, the deleted node's remaining neighbors are the natural replacements for the two-hop reach that $a$ loses, from which RobustPrune selects the edges worth adding.
A single publication then installs $a$'s new neighbor sequence, comprising its surviving neighbors together with the selected replacements and excluding the deleted node.
This one publication per live neighbor rebuilds the bypassing edges and purges the edge into the tombstone within the same atomic step.
The repair strategy follows the local-repair line of~\cite{dzhao_tpami09}, while its realization as lock-free online single-reference edits is specific to ETALE and makes the removal monotone by construction through the inlined deletion flag (Proposition~\ref{prop:mono}).
An edge into the tombstone from a node that was not its neighbor is left for lazy cleanup, removed when that source next publishes and drops the tombstoned target.

\paragraph{Complexity Analysis}
A single deletion performs one CAS for the tombstone and one publication for each of the up to $R$ live neighbors repaired.
Each repair runs a RobustPrune over at most $R$ candidates followed by one $O(R)$ publication, which amounts to $O(R^2)$ work per deletion.
The per-neighbor repairs remain independent and execute concurrently.

\begin{algorithm}[t]
  \caption{$\textsc{Delete}(u)$: tombstone $u$ and repair connectivity around $u$}
  \label{alg:delete}
  \SetKwInOut{Input}{Input}
  \Input{live node $u$}
  $\langle\delta,d,s\rangle \gets r(u)$\;
  $\textsc{CAS}\big(r(u),\, \langle 0,d,s\rangle,\, \langle 1,d,s\rangle\big)$ \tcp*{tombstone, keep slab}
  $B \gets N(u)$\;
  \ForEach{$a \in B$ \textbf{with} $\delta_a = 0$}{
    $K \gets \textsc{RobustPrune}_\alpha\big(\{\, b \in B : b \neq a \,\},\, R\big)$ \tcp*{fix $a$}
    $N \gets \textsc{RobustPrune}_\alpha\big(\,(A(s_a) \setminus \{u\}) \cup K,\, R\,\big)$\;
    $\textsc{Publish}(a, N)$ \tcp*{rebuild and purge in one step}
  }
\end{algorithm}

\subsubsection{Reclamation}
\label{sec:reclaim}

Reclamation bounds the footprint by freeing the slabs that tombstones hold as relays, which sustained churn would otherwise cause to accumulate without bound.
It adds no new subroutine and reuses $\textsc{Publish}$ and $\textsc{RetireSlab}$ from Section~\ref{sec:cow}.
Algorithm~\ref{alg:reclaim} states the procedure as a policy-triggered pass in two ordered phases.

The first phase, \emph{compaction}, removes every edge from a live node to a tombstone.
Each live node whose neighbors include a tombstone is republished with the tombstoned entries dropped from its sequence.
After compaction, no live node references any tombstone.

The second phase, \emph{reclaim}, frees the now-orphaned tombstones.
Each tombstone's reference is cleared by a CAS that retires its slab to the pool.
The order of the two phases is a safety contract.
A tombstone's slab may be freed only once no live node can still follow an edge into it, which is exactly the postcondition that compaction establishes.

\paragraph{Complexity Analysis}
A single reclamation republishes at most one slab per live node in compaction and retires one slab per tombstone in reclaim.
This gives $O(L + t)$ publications over a live set of size $L$ holding $t$ tombstones.
The number of live nodes that compaction actually republishes equals the number that have a tombstoned neighbor.
This count is smaller than $L$ and grows sublinearly in $t$, which is the property that Section~\ref{sec:theory} will demonstrate and that bounds the cost of deferring reclamation.

\begin{algorithm}[t]
  \caption{$\textsc{Reclaim}()$: bound the footprint after churn}
  \label{alg:reclaim}
  \tcc{Phase 1, Compaction}
  \ForEach{live node $u$ \textbf{in parallel}}{
    \uIf{$\exists\, b \in N(u) : \delta_b = 1$}{
      $N \gets \{\, (b,\rho) \in A(s_u) : \delta_b = 0 \,\}$\;
      $\textsc{Publish}(u, N)$\;
    }
  }
  \tcc{Phase 2, Reclaim}
  \ForEach{tombstone $u$ \textbf{in parallel}}{
    \uIf{$\textsc{CAS}\big(r(u),\, \langle 1,d_u,s_u\rangle,\, \langle 1,0,\bot\rangle\big)$}{
      $\textsc{RetireSlab}(s_u)$\;
    }
  }
\end{algorithm}

\section{Correctness and Properties}
\label{sec:theory}


\subsection{Execution Model}
\label{sec:model}

The guarantees in this section are stated over a model of concurrent execution
with two levels, namely \emph{launches} and \emph{executions}, defined as follows.

\begin{definition}[Launch and execution]
  \label{def:exec}
  A \emph{launch} is a finite set of threads that run concurrently, each thread being a finite sequence of \emph{atomic steps}, where an atomic step is one aligned single-word load, one aligned single-word store, or one single-word compare-and-swap (CAS).
  The (atomic) steps of the threads in a launch interleave in an arbitrary order.
  A launch \emph{terminates} once every one of its threads has terminated.
  An \emph{execution} is a finite sequence of launches $\ell_1, \ell_2, \dots, \ell_n$ in which $\ell_{i+1}$ begins only after $\ell_i$ has terminated.
\end{definition}

In the context of GPUs, the above definition corresponds to the following. A
launch is one kernel invocation, in which many GPU threads are the lanes of
that kernel's warps. The execution of consecutive launches, also known as
serialization, is the series of device synchronization between kernels. ETALE
issues the maintenance of a round and the reclamation that may follow it as
separate launches; by doing so, according to Definition~\ref{def:exec}, no
maintenance step and no reclamation step ever interleave.

The serialization of launches enables the slab pool of Section~\ref{sec:cow} to
recycle memory without per-slab reference counting. A \emph{retirement} is a
device step that pushes a freed slab onto the retired stack and never touches
the free stack. A retired slab can return to the free stack only through the
host-side recycle step that runs between launches. A slab retired during a
launch stays unavailable for reallocation until that launch has terminated. We
formally state this property as the following lemma.

\begin{lemma}[Epoch separation]
  \label{lem:epoch}
  A slab $\sigma$ retired by some thread during launch $\ell_i$ is not reallocated at any step of $\ell_i$.
  No thread that reads $\sigma$ during $\ell_i$ would read it again after a reallocation.
\end{lemma}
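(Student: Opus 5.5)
The plan is to argue directly from the pool discipline of Section~\ref{sec:cow} and the launch serialization of Definition~\ref{def:exec}, treating the free stack and the retired stack as two disjoint device-side sets whose only connection is the host-side recycle step. First I fix the invariant that matters. During any launch $\ell_i$, the only device operation that removes an identifier from the free stack is $\textsc{AllocSlab}$ (a pop). The only device operation that touches the retired stack is $\textsc{RetireSlab}$ (a push). No device step moves an identifier from the retired stack to the free stack. The free stack is only popped within a launch, and the host recycle runs only between launches. So the contents of the free stack at any step of $\ell_i$ form a subset of its contents at the start of $\ell_i$.

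For the first claim, suppose $\sigma$ is retired at some step $t$ of $\ell_i$. Then $\sigma$ was allocated and owned at some earlier point, so it is not in the free stack at $t$. I would make this precise by showing that, at all times, each slab identifier lies in at most one of three classes: free, referenced or private to a writer, and retired. Induct over the atomic steps of the launch, using that each pop and each push is serialized by its stack's atomic counter. Since nothing after $t$ within $\ell_i$ can insert $\sigma$ into the free stack, no $\textsc{AllocSlab}$ in $\ell_i$ can return $\sigma$ after $t$. Before $t$, $\sigma$ was held by the retiring owner rather than being free. One subtlety is the case where $\sigma$ was allocated earlier in $\ell_i$ itself, as with a lost CAS in Algorithm~\ref{alg:cow}. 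Here the single pop that produced it is its unique allocation, and I must rule out a second pop of the same identifier. This follows because the pop counter is atomic and monotone within the launch, so distinct pops read distinct stack positions.

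For the second claim, any read of $\sigma$ during $\ell_i$ is performed by a thread of $\ell_i$. By Definition~\ref{def:exec}, that thread terminates before $\ell_{i+1}$ begins, so it takes no steps afterwards. The earliest reallocation of $\sigma$ is then a pop in some launch $\ell_j$ with $j > i$, because the recycle step happens only after $\ell_i$ terminates. Every step of such a reading thread therefore precedes every possible reallocation, and it cannot read $\sigma$ after a reallocation.

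I expect the main obstacle to be the disjointness invariant for the first claim, specifically justifying that a stack is never concurrently pushed and popped. I would handle it by appealing directly to the stated pool contract, under which within a launch the free stack is pop-only and the retired stack is push-only. I would then show that a counter-based stack under that contract cannot yield duplicate or resurrected identifiers. I would also need to assume the free stack holds no duplicates at launch start, which follows inductively because the host recycle moves exactly the retired identifiers, each of which was popped once and is no longer reachable as free.
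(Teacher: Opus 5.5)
Your proposal is correct and follows essentially the same route as the paper: a retired slab can rejoin the free stack only through the host-side recycle between launches, so no $\textsc{AllocSlab}$ in $\ell_i$ after the retirement can pop it, and every read of $\sigma$ by a thread of $\ell_i$ comes before $\ell_i$ terminates and therefore before any reallocation. Your extra care is warranted, because the paper asserts that $\sigma$ is absent from the free stack throughout $\ell_i$; that is false when $\sigma$ is popped and retired within the same launch (for example, the fresh slab of a lost CAS in $\textsc{Publish}$), and your argument that the free stack's contents shrink monotonically and distinct pops read distinct positions is what actually covers that case.
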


\begin{proof}
  A reallocation of $\sigma$ is an $\textsc{AllocSlab}$ step that pops $\sigma$ from the free stack, which can occur only while $\sigma$ is present in that stack.
  By Definition~\ref{def:exec}, the retirement of $\sigma$ and every read of $\sigma$ are steps of $\ell_i$ that complete before $\ell_i$ terminates.
  The retirement places $\sigma$ on the retired stack, from which $\sigma$ rejoins the free stack only at the host-side recycle step.
  This recycle runs by construction after $\ell_i$ terminates and before $\ell_{i+1}$ begins.
  The slab $\sigma$ is therefore absent from the free stack throughout $\ell_i$.
  No $\textsc{AllocSlab}$ step of $\ell_i$ can pop it. The earliest reallocation of $\sigma$ therefore lies in $\ell_{i+1}$ or later.
  Every read of $\sigma$ during $\ell_i$ precedes the termination of $\ell_i$, which precedes the recycle of $\sigma$ that must complete before any reallocation.
  The reads and the reallocation are thus separated by the boundary of $\ell_i$ and cannot overlap.
\end{proof}

\subsection{Deletion Monotonicity}
\label{sec:mono}

A dynamic index must not resurrect a deleted node even as concurrent insertions
and repairs republish references throughout the graph. ETALE guarantees this
without any coordination beyond the single-word compare-and-swap (CAS) because
the deletion flag resides in the same word that every publication swaps. We
state this formally in the following proposition, which we call \emph{deletion
monotonicity}.

\begin{proposition}[Deletion monotonicity]
  \label{prop:mono}
  If a node $u$ has $\delta_u = 1$ at some point in an execution, then $\delta_u = 1$ holds at every later point and $u$ is absent from every query result returned thereafter.
\end{proposition}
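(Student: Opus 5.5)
The plan is to prove the two claims in order: first that the flag is stable, then that a tombstone never enters a returned result. For stability, I will argue by induction over the atomic steps of Definition~\ref{def:exec}, taken in the interleaved order within each launch and in launch order across an execution. The only steps that can change $\delta_u$ are successful writes to the word $r(u)$. Since $r(u)$ is a single aligned word, the relevant steps are the successful CAS operations on it, and I will check that no plain store targets $r(u)$ outside initialization. There are exactly three call sites:
\begin{itemize}
\item the CAS in $\textsc{Publish}$ (Algorithm~\ref{alg:cow}), which writes $r'=\langle\delta,|N|,s'\rangle$ with $\delta$ taken from the snapshot;
\item the tombstoning CAS of Algorithm~\ref{alg:delete}, which writes flag $1$;
\item the reclaim CAS of Algorithm~\ref{alg:reclaim}, which writes $\langle 1,0,\bot\rangle$.
\end{itemize}
The only case needing care is $\textsc{Publish}$. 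Its CAS succeeds only if $r(u)$ still equals the snapshot $\langle\delta,d,s\rangle$ at the moment of the swap. So if $\delta_u=1$ just before that step, the snapshot's flag is $1$, and the new word also carries $1$. A stale snapshot with flag $0$, taken before the tombstone was set, cannot win, because the word has since changed. This is exactly the point where placing the flag in the swapped word matters. It also excludes ABA: any word with flag $1$ differs from the stale snapshot in the flag bit itself. Hence, by induction, every step maps a state with $\delta_u=1$ to one with $\delta_u=1$.

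For the second claim I must cover every result returned after the time $t$ at which $\delta_u$ first equals $1$. This includes searches that were already in flight at $t$, not only those that began after it. The hook is the admission rule of Section~\ref{sec:delete}: a node enters a search's output only if its sampled reference has flag $0$. I will fix the semantics of ``returned at time $t'$'' as follows: the admission sample for each output node is an atomic load of $r(\cdot)$ performed when that node is committed to the output, and the result is returned only after all these loads. Then any result returned after $t$ whose admission load for $u$ occurs after $t$ reads $\delta_u=1$ by the first part, so $u$ is excluded.

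The main obstacle is an in-flight search that loaded $r(u)$ with flag $0$ before $t$ and returns after $t$. I would handle it by stating explicitly that the output filter is applied as a final pass at return time: each candidate's reference is re-loaded immediately before the result is emitted, which costs one load per result entry. Under this discipline the return step itself is the linearization point of the query. Every admission load is then ordered after every earlier tombstoning CAS on the same word, so no result returned after $t$ can contain $u$.

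Lemma~\ref{lem:epoch} closes the remaining gap. If $u$'s slab is reclaimed and the identifier word is later reinterpreted, a reader still holding the old slab never observes it reallocated within the same launch. So no stale read can make a flag-$0$ value appear for $u$ after $t$.
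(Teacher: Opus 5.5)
Your stability argument is the paper's: it too reduces the claim to the three CAS sites and to the fact that a successful $\textsc{Publish}$ CAS certifies that the snapshot's flag is the current one. Your checks on plain stores and on ABA tighten it; the paper simply asserts that $r(u)$ changes only by CAS.

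The gap is in how you handle a query in flight at the tombstoning time $t$. The step ``under this discipline the return step itself is the linearization point of the query'' fails in the model of Definition~\ref{def:exec}. The final re-load of $r(u)$ and the emission of the result are distinct atomic steps of the query thread, and the tombstoning CAS of Algorithm~\ref{alg:delete} can be scheduled between them. The query then re-loads flag $0$, $u$ is tombstoned at $t$, and a result containing $u$ is still returned after $t$. Re-loading ``immediately before'' the return only narrows this window. With several result entries the final pass is several loads at several different steps, so no single step sees all flags at once. Your closing sentence, that every admission load is ordered after every earlier tombstoning CAS, holds relative to the load, not relative to the return time over which the statement quantifies. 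The final pass is also not part of ETALE's search, whose admission rule (Section~\ref{sec:delete}) only requires that each output node's sampled reference carry flag $0$.

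There are two sound ways to finish. The first is to adopt as the definition the semantics you already set up in your second paragraph: a query's decision about $u$ takes effect at its admission load of $r(u)$, so \emph{returned after $t$} means that load follows $t$. Your first part then finishes the proof with no extra pass. This is exactly what the paper's one-line argument assumes implicitly, since it never considers queries in flight at $t$. The second is to use launch serialization. Query kernels and the deletion kernel are distinct launches, so by Definition~\ref{def:exec} no query step interleaves with the tombstoning CAS, and every load made by a query that returns after $t$ follows $t$. This covers the formal model but not the overlapping streams of Section~\ref{sec:eval-concurrent}.

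Separately, your last paragraph does not do what you want. $\delta_u$ lives in $r(u)$, which is indexed by identifier and never recycled, so the slab-reuse guarantee of Lemma~\ref{lem:epoch} is irrelevant to it. The assumption actually needed there is that the identifier $u$ is never reissued to a fresh insertion, because re-initializing $r(u)$ would be a plain store writing flag $0$ after $t$.
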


\begin{proof}
  By Definition~\ref{def:graph}, the flag $\delta_u$ is one field of the single word $r(u)$. By Definition~\ref{def:exec}, that word changes only through a successful CAS.
  It therefore suffices to show that every successful CAS on $r(u)$ after $\delta_u$ first equals $1$ writes a value whose flag is again $1$.
  Recall that only three operations modify $r(u)$, as follows.
  (i) In $\textsc{Publish}$ (Algorithm~\ref{alg:cow}), the installed word is $\langle \delta, |N|, s'\rangle$, where $\delta$ is the flag read in the snapshot that the CAS tests for equality.
  A successful CAS certifies that $r(u)$ did not change between the snapshot and the swap, so $\delta$ still equals the current flag, namely $1$.
  (ii) In $\textsc{Delete}$ (Algorithm~\ref{alg:delete}), the single CAS sets the flag to $1$.
  (iii) In $\textsc{Reclaim}$ (Algorithm~\ref{alg:reclaim}), the CAS replaces $\langle 1, d, s\rangle$ by $\langle 1, 0, \bot\rangle$ and leaves the flag at $1$.
  No operation produces a successful swap that clears the flag.
  That is, once $\delta_u$ reaches $1$, it remains $1$.

  It remains to show that $u$ is absent from query results. Recall that a search
  returns a node only when the reference it samples has flag $0$. This implies
  that a node with $\delta_u = 1$ is never returned, even though its slab is
  retained to relay traversal. Once $\delta_u$ reaches $1$, the node $u$ is
  absent from every query result returned thereafter.
\end{proof}

The monotonicity of deletion is a direct consequence of inlining the deletion
flag in the same word that every publication swaps. It is a property of a
single node's word that is established with no lock and no multi-word transaction. A
representation that placed the deletion flag in a separate array would instead
have to keep that array consistent with the adjacency under concurrent
publication. The inlined flag design of ETALE removes exactly that obligation.

\subsection{Reclaim Safety}
\label{sec:safety}

Tombstoning retains a deleted node's slab such that concurrent searches may
still relay through it. While reclamation eventually frees that slab, safety
requires that no slab is freed while a search may still follow an edge into it.
The guarantee rests on the two-phase order of $\textsc{Reclaim}$ together with
the epoch separation of Lemma~\ref{lem:epoch}. We formalize this property in
the following proposition, which we call \emph{no use-after-free}.

\begin{proposition}[No use-after-free]
  \label{prop:safety}
  No thread follows an edge into a slab that has been reclaimed and reallocated.
\end{proposition}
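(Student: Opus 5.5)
The plan is a case analysis on where the read of a reallocated slab could take place, relative to the launch in which the slab was retired. Fix a slab $\sigma$ retired during some launch $\ell_i$, and suppose a thread follows an edge into $\sigma$ at a step that occurs after $\sigma$ has been reallocated. Such an edge reaches $\sigma$ in one way only: a thread loads a reference word $r(v)$ whose slab field is $\sigma$, and then reads $A(\sigma)$. I would split on whether that load happens in $\ell_i$ or in a later launch $\ell_j$ with $j>i$.

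\textbf{Case $j=i$.} This case is handled directly by Lemma~\ref{lem:epoch}. No $\textsc{AllocSlab}$ step of $\ell_i$ can pop $\sigma$. Every read of $\sigma$ during $\ell_i$ precedes the recycle that any reallocation must follow. So no read within $\ell_i$ can observe a reallocated $\sigma$. Stale readers that snapshotted $r(v)$ before a $\textsc{Publish}$ retired $\sigma$ fall in this case: they read the old immutable contents, which remain intact until the launch ends.

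\textbf{Case $j>i$.} This is the main obstacle. I need an invariant at launch boundaries: once $\ell_i$ terminates, no reference word in the graph names a retired slab. I would establish it for each of the three retirement sites.
\begin{enumerate}
  \item In $\textsc{Publish}$, $\textsc{RetireSlab}(s)$ runs only after a successful CAS has replaced $s$ in $r(u)$. Slab identifiers live only in the reference words, since $A$ stores neighbor identifiers and not slab identifiers. Each slab is held by at most one reference, because $\textsc{AllocSlab}$ hands out private slabs. Hence no word still names $s$.
  \item On a lost CAS, $s'$ was never published, so no reference names it.
  \item In $\textsc{Reclaim}$, phase~2 swaps $r(u)$ to $\langle 1,0,\bot\rangle$ before retiring $s_u$. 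So $\sigma$ is named by no reference word, and a thread in $\ell_j$ that loads $r(u)$ sees $\bot$ rather than $\sigma$.
\end{enumerate}
The remaining worry is a traversal that follows a neighbor identifier $u$ into the reclaimed tombstone. Here I would invoke the phase ordering: compaction runs before phase~2, and its postcondition is that no live node lists a tombstone. Search treats a $\bot$ reference or a degree of $0$ as a dead end. Even if the tombstone edge persists in another tombstone's list, the load yields $\bot$ and never a slab.

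Combining both cases, every read of $\sigma$ either precedes the recycle (case $j=i$) or cannot reach $\sigma$ through any reference word (case $j>i$). Therefore no thread follows an edge into a slab that has been reclaimed and reallocated.
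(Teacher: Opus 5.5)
Your argument is correct, but it reaches the result by a different mechanism than the paper. The paper's proof considers only tombstone slabs freed in phase~2 of \textsc{Reclaim}, and it argues through graph reachability:
\begin{itemize}
\item compaction leaves no live edge into a tombstone;
\item Lemma~\ref{lem:epoch} keeps the retired slab out of the free stack until the reclamation launch ends;
\item launch serialization places every search that could still reach the slab before that launch.
\end{itemize}
The compaction-then-reclaim ordering is thus treated as the safety contract.

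You argue instead through the indirection of the reference word. Every slab access is mediated by a load of some $r(v)$. Every retirement site retires a slab only once no reference word names it: a won CAS in \textsc{Publish}, a lost one, and phase~2 of \textsc{Reclaim}. So a stale pointer cannot be regenerated across a launch boundary, and within the retiring launch Lemma~\ref{lem:epoch} protects stale snapshots. This buys generality in three ways:
\begin{itemize}
\item It covers copy-on-write retirements, which the paper's proof does not treat explicitly.
\item It shows that memory safety does not depend on compaction preceding reclaim. An edge into a reclaimed tombstone resolves to $\langle 1,0,\bot\rangle$ and reads no slab, so compaction is needed for navigability and for Proposition~\ref{prop:footprint}, not for safety.
\item It sidesteps the paper's claim that no traversal originates from tombstones, which sits awkwardly beside tombstones relaying traversal.
\end{itemize}

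The price is two points you should make explicit. First, exclusive slab ownership is an inductive invariant across launches, not an immediate consequence of \textsc{AllocSlab} handing out private slabs. It requires three facts: the free stack only ever holds slabs named by no reference word; each slab is retired at most once per ownership period; and the atomic counter makes concurrent pops distinct. Second, you need a precise notion of violation. As written, your closing claim that in case $j>i$ the read ``cannot reach $\sigma$ through any reference word'' is false once $\sigma$ has been reallocated, because its new owner's word names it. What you mean is that any pointer to $\sigma$ obtained in $\ell_j$ belongs to the new ownership period, so the read is a legitimate access rather than a use-after-free. Define a violation as a read of $A(\sigma)$ through a pointer obtained in an earlier ownership period of $\sigma$, and both of your cases go through cleanly.
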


\begin{proof}
  Consider a tombstone $u$ whose slab is freed in the reclaim phase.
  A search reaches the slab of $u$ only by following an edge $w \to u$ from a node $w$ it is visiting.
  After the compaction phase, no live node holds an edge into a tombstone since compaction republishes every live node whose adjacency contains a tombstoned neighbor and drops those neighbors from the new sequence (Algorithm~\ref{alg:reclaim}).
  The only remaining references into the slab of $u$ are from other tombstones, which a search never admits as results and from which no live traversal originates.
  The reclaim phase retires the slab of $u$ at a step of the reclamation launch.
  By Lemma~\ref{lem:epoch}, that retired slab is not reallocated before the reclamation launch terminates, hence not before every thread of that launch has terminated.
  By Definition~\ref{def:exec}, the searches run in launches that synchronization separates from the reclamation launch.
  Any search that reaches the slab therefore terminates before the reclamation launch begins,
  whereas the slab is reallocated only in a launch after the reclamation.
  Therefore, no search can read the slab once it is reallocated.
\end{proof}

\subsection{Bounded VRAM Footprint}
\label{sec:footprint}

We now show that the resident GPU memory (VRAM) is bounded. A node is live when
$\delta = 0$. We write $L$ for the number of live nodes. Every node, live or
tombstoned, owns exactly one slab. A publication allocates one fresh slab and
retires the displaced one. 
Between reclamations, the resident slab count equals
$L$ plus the number of tombstones that have not yet been reclaimed.

\begin{proposition}[Bounded memory footprint]
  \label{prop:footprint}
  Immediately after a reclamation completes, the resident slab count equals the live-set size $L$.
\end{proposition}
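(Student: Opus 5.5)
The argument is a counting argument over slab ownership. It uses an invariant that holds at every launch boundary: the resident slab count equals the number of identifiers whose reference names a slab, i.e.\ whose slab field is not $\bot$. Here \emph{resident} means allocated and not sitting on the free or retired stack once the host-side recycle step has run.

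First I establish this invariant by inspecting the three ways a reference changes.
\begin{itemize}
\item A successful $\textsc{Publish}$ (Algorithm~\ref{alg:cow}) pops one slab $s'$ and retires the displaced slab $s$, so the owner still names exactly one slab.
\item A failed attempt pops $s'$ and retires it at once, a net change of zero.
\item The tombstoning CAS of $\textsc{Delete}$ changes only the flag and leaves the slab field intact.
\end{itemize}
Hence every identifier with a non-$\bot$ slab field owns exactly one resident slab, and every other allocated slab is on the retired stack. By Lemma~\ref{lem:epoch} the retired stack is moved to the free stack at the launch boundary, so such slabs are not resident once the recycle step runs. This confirms the accounting stated before the proposition: between reclamations, the resident count is $L$ plus the number of unreclaimed tombstones.

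Next I track the two phases of $\textsc{Reclaim}$ (Algorithm~\ref{alg:reclaim}). Compaction consists only of $\textsc{Publish}$ calls on live nodes. By the invariant, each preserves one slab per owner, so it leaves the count at $L+t$. By Proposition~\ref{prop:mono}, no live node becomes a tombstone and no tombstone is revived during compaction, so $L$ and $t$ are unchanged. In the reclaim phase, each tombstone $u$ whose CAS succeeds has its reference set to $\langle 1,0,\bot\rangle$ and its slab retired. The count therefore drops by one per such tombstone, down to $L$ once every tombstone has been cleared.

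The main obstacle is showing that every tombstone's CAS in phase~2 actually succeeds, so that none is left owning a slab. The CAS expects $\langle 1,d_u,s_u\rangle$, and it can fail only if $r(u)$ changed after the snapshot. I would argue from the execution model (Definition~\ref{def:exec}) that reclamation runs as its own launch, with no maintenance steps interleaved. Within that launch, no compaction $\textsc{Publish}$ targets a tombstone, since compaction iterates over live nodes only. Distinct tombstones are handled by distinct threads, so no two threads race on the same $r(u)$. The snapshot is therefore current and the CAS succeeds.

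A secondary point is ordering. I would rely on the launch boundary between the two phases, or at least on phase~2 following phase~1, so that no compaction publication still needs a tombstone's slab. I would also note that the freed slabs leave residency at the host recycle step after the reclamation launch terminates, by Lemma~\ref{lem:epoch}. Taking ``immediately after a reclamation completes'' to mean after that recycle, the resident slab count is exactly $L$.
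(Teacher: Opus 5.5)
Your proof is correct and follows the same skeleton as the paper's: compaction, then the reclaim CAS on each tombstone, then the host-side recycle. What differs is which facts carry the argument.

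The paper's proof cites Proposition~\ref{prop:safety} and the orphanhood of every tombstone after compaction. It takes the one-slab-per-node accounting, including the fact that compaction leaves each live node with exactly one slab, from the remark before the proposition that a publication allocates one slab and retires one. You instead prove that accounting as a conservation invariant over successful and failed $\textsc{Publish}$ attempts and the tombstoning CAS. You then argue directly that every phase-2 CAS succeeds, because within the reclamation launch no other thread writes a tombstone's reference. The paper passes over that step when it simply says each orphaned tombstone's slab ``is retired''.

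Your decomposition isolates what actually drives the count to $L$. Orphanhood and phase ordering determine whether freeing is safe, not whether it happens, and you rightly treat ordering as a secondary remark. The paper's appeal to Proposition~\ref{prop:safety} adds only the assurance that the slabs dropped from the count are safe to recycle, which the count itself does not need.

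Two small slips. First, Proposition~\ref{prop:mono} only forbids reviving a tombstone. That no live node becomes a tombstone during compaction instead follows from the absence of $\textsc{Delete}$ in the reclamation launch together with the flag carry-forward in $\textsc{Publish}$. Second, your first case should also cover the initial $\textsc{Publish}$ of a freshly inserted node. That call displaces no slab, so it raises both sides of your invariant by one.
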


\begin{proof}
  By Proposition~\ref{prop:safety}, the reclaim phase safely frees the slab of every tombstone.
  By construction it frees no live node's slab.
  After compaction, no live node references a tombstone, so every tombstone is orphaned and its slab is retired in the reclaim phase.
  The recycle step returns the retired slabs to the free pool.
  The surviving slabs are exactly those of the $L$ live nodes, one apiece.
  The resident count is therefore $L$, as claimed.
\end{proof}

This bound property implies that the resident memory follows a sawtooth profile
under continuous churn. Between reclamations, the slab count climbs as deletions
accumulate tombstones that are not yet freed. Each reclamation then frees those
tombstones at once and resets the count to its lower bound $L$.
Section~\ref{sec:eval-memory} will verify this pattern on real workloads.

\subsection{Sublinear Reclaim Cost}
\label{sec:cost}

The remaining question is the cost of each reset. 
Clearly, the cost of
$\textsc{Reclaim}$ is dominated by compaction, which republishes every live
node that has at least one edge into a tombstone. A live node with such an edge
is called \emph{dirty}. A single republication cleans all of a node's dead edges at once;
therefore, a dirty node costs one republication regardless of how many dead edges it carries. 
Reclaim cost thus scales with the number of dirty nodes, not with
the number of dead edges. In fact, the number of dirty nodes grows sublinearly
in the number of tombstones because a tombstone's stranded in-edges tend to
concentrate on already-dirty nodes rather than spreading uniformly over the
live set. We formalize this property in the following proposition, which we
call \emph{sublinear reclaim}.

\begin{proposition}[Sublinear reclaim]
  \label{prop:reclaim}
  $D(t)$ denotes the expected number of dirty nodes after $t$ tombstones accumulate over a live set of size $L$ with mean in-degree $\bar d$.
  Under a model in which each tombstone strands its $\bar d$ in-edges on live nodes chosen independently and uniformly at random, $D(t)$ grows as
  \[
    D(t) \;=\; L\!\left(1 - \Bigl(1 - \tfrac{1}{L}\Bigr)^{\bar d\, t}\right)
    \;\approx\; L\!\left(1 - e^{-t/\tau}\right), \qquad \tau = \frac{L}{\bar d}.
  \]
  The reclaim cost is proportional to $D(t)$ and saturates as
  $\mathrm{reclaim}(t) = M\!\left(1 - e^{-t/\tau}\right)$, where $M$ is the cost
  of republishing all $L$ live nodes.
\end{proposition}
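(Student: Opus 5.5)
The plan is a balls-into-bins argument with linearity of expectation. Under the stated model, the $t$ tombstones together strand $\bar d\,t$ in-edges. Each stranded edge lands on a live node chosen independently and uniformly from the $L$ live nodes. We treat $\bar d$ as an integer, or read $\bar d\,t$ as the total number of stranded edges. For each live node $w$, let $I_w$ be the indicator that $w$ receives at least one stranded edge, so that $w$ is dirty. Then $D(t) = \mathbb{E}\big[\sum_w I_w\big] = \sum_w \Pr[I_w = 1]$, and no independence between the $I_w$ is needed.

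The key step is computing $\Pr[I_w = 0]$. Node $w$ is missed by a single edge with probability $1 - 1/L$. Because the edge placements are independent, $w$ is missed by all $\bar d\,t$ edges with probability $(1 - 1/L)^{\bar d t}$. Summing over the $L$ nodes gives the exact formula $D(t) = L\big(1 - (1-1/L)^{\bar d t}\big)$.

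For the approximation, we write $(1 - 1/L)^{\bar d t} = \exp\big(\bar d t \ln(1 - 1/L)\big)$ and use $\ln(1 - 1/L) = -1/L - O(1/L^2)$. This yields $e^{-\bar d t/L}\,(1 + O(\bar d t/L^2))$, i.e.\ $e^{-t/\tau}$ with $\tau = L/\bar d$. The inequality $(1-1/L)^{k} \le e^{-k/L}$ also shows that the exact value lies below the exponential form by only a relative $O(1/L)$ error in the exponent. Sublinearity then follows from concavity. The function $1 - e^{-t/\tau}$ is concave in $t$, its slope at $0$ is $\bar d$ nodes per tombstone, its slope decays geometrically, and it saturates at $L$. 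Hence $D(t) \le \min(\bar d\,t, L)$, and $D(t)/t \to 0$.

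For the cost statement, we invoke the reasoning preceding the proposition. Compaction dominates reclaim, and each dirty node costs exactly one $O(R)$ \textsc{Publish}, independent of how many dead edges it carries. The reclaim phase adds $O(t)$ cheap CASes, which we absorb or treat as lower order as the statement implicitly does. The cost is therefore $c\cdot D(t)$ for a per-republication constant $c$. Setting $M = cL$, the cost of republishing all live nodes, gives $\mathrm{reclaim}(t) = M(1 - e^{-t/\tau})$.

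The main obstacle is not the calculation but the modelling. Stranded in-edges come only from live sources, and a live node holds at most $R$ out-edges, so placements are not truly independent. Also, $L$ changes as tombstones accumulate. We handle both by stating explicitly that the proposition holds under the idealized uniform-independent model with $L$ fixed during the window, and we flag the $O(1/L)$ and $O(t)$ terms as the only approximations made.
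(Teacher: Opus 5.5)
Your proposal is correct and follows the paper's route. Both arguments get the per-node clean probability $(1-1/L)^{\bar d t}$ from independence, sum its complement over the $L$ live nodes, pass to $e^{-t/\tau}$ for large $L$, and take cost proportional to the dirty count with the constant folded into $M$. Your log-expansion error term, the bound $D(t)\le\min(\bar d t, L)$, and your flag on the $O(t)$ Phase-2 CASes are refinements the paper omits, since it simply equates reclaim cost with compaction cost; they are not a different argument.
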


\begin{proof}
  A live node remains clean after $t$ tombstones precisely when none of the $\bar d\, t$ stranded in-edges falls on it.
  Since the stated model lands those in-edges independently and uniformly over the $L$ live nodes, a fixed node avoids any one of them with probability $1 - 1/L$, hence avoids all $\bar d\, t$ of them with probability $(1 - 1/L)^{\bar d t}$.
  Summing the complementary probability over the $L$ nodes gives the expected dirty count $D(t) = L\bigl(1 - (1 - 1/L)^{\bar d t}\bigr)$.
  For large $L$, the limit $(1 - 1/L)^{L} \to e^{-1}$ turns the inner factor into $e^{-\bar d t / L}$, so $D(t)$ takes the exponential form with $\tau = L / \bar d$.
  Because compaction republishes each dirty node once, its cost is proportional to $D(t)$ and inherits the same saturating shape with the per-republication constant absorbed into $M$.
  As $t$ grows without bound, every live node eventually becomes dirty.
  The count $D(t)$ then approaches $L$ and the cost approaches its ceiling $M$,
  which is the cost of republishing the entire live set once.
\end{proof}

The saturating form has a practical consequence for a long-running index.
Because the cost levels off rather than growing without bound, no amount of
deferred deletion can make a single reclamation arbitrarily expensive. The next
corollary states this ceiling exactly.

\begin{corollary}[Bounded reclaim cost]
  \label{cor:reclaim-bound}
  For every $t$, a single reclamation costs $\mathrm{reclaim}(t) < M$, the cost of rebuilding the entire live set once.
\end{corollary}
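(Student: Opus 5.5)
The plan is to read the corollary off the closed form of Proposition~\ref{prop:reclaim}, which gives $\mathrm{reclaim}(t) = M\bigl(1 - e^{-t/\tau}\bigr)$ with $\tau = L/\bar d$. The proof needs only one fact: the bracketed factor lies strictly below $1$ for every finite $t$.

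\textbf{Step 1: the exact form.} For $t \ge 0$, and assuming $L \ge 1$ and $\bar d > 0$, we have $e^{-t/\tau} > 0$, so $1 - e^{-t/\tau} < 1$. Multiplying by $M > 0$ gives $\mathrm{reclaim}(t) < M$.

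\textbf{Step 2: the exact discrete form.} The exponential is only an approximation, so the same inequality should be checked on $D(t) = L\bigl(1 - (1 - 1/L)^{\bar d t}\bigr)$. For $L \ge 2$, the base satisfies $0 < 1 - 1/L < 1$. Any finite power of it is therefore strictly positive, which gives $D(t) < L$. Since cost is proportional to the dirty count and $M$ is the cost of republishing all $L$ nodes, this yields $\mathrm{reclaim}(t) < M$.

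\textbf{Step 3: the edge cases.}
\begin{itemize}
\item When $t = 0$, the cost is $0 < M$.
\item When $L = 1$, the base is $0$ and $D(t) = L$ for $t \ge 1$, so only the non-strict bound $\le M$ survives.
\end{itemize}
I would state the strict inequality for $L \ge 2$ and the non-strict one in general.

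\textbf{Main obstacle.} The difficulty is interpretive rather than computational. ``Cost'' in the statement is an expected cost under the uniform-placement model of Proposition~\ref{prop:reclaim}, and $M$ conflates republishing with rebuilding the live set. I would therefore argue, as a model-free backstop, that compaction republishes at most the $L$ live nodes once each (Algorithm~\ref{alg:reclaim}). This gives a deterministic $\le M$ bound on the compaction term in every execution.

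The reclaim phase adds one retirement per tombstone. I would treat that term as the lower-order $O(t)$ constant-time work that the paper's definition of $M$ absorbs.

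Strictness then holds in expectation exactly as in Steps 1 and 2.
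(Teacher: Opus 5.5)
Your Step 1 is exactly the paper's proof: the factor $1 - e^{-t/\tau}$ is strictly below $1$ for every finite $t$, so $\mathrm{reclaim}(t) = M(1 - e^{-t/\tau}) < M$. Your Steps 2--3 and your deterministic non-strict bound on compaction are correct refinements that the paper does not make. One remark in your backstop should be dropped: $M$ does not depend on $t$, so it cannot absorb the $O(t)$ retirement work of the second phase for all $t$. The corollary holds because the paper's $\mathrm{reclaim}(t)$ models compaction alone (``dominated by compaction''), and that is precisely the quantity your Step 1 bounds.
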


\begin{proof}
  The factor $1 - e^{-t/\tau}$ lies strictly below $1$ for every finite $t$.
  The cost $\mathrm{reclaim}(t) = M(1 - e^{-t/\tau})$ is therefore strictly below $M$.
\end{proof}

The bound holds however long reclamation is deferred. Deferral is in fact
favorable because the cost grows ever more slowly as deletions accumulate. The
cost $\mathrm{reclaim}(t) = M(1 - e^{-t/\tau})$ is concave, since its second
derivative $-\tfrac{M}{\tau^2}e^{-t/\tau}$ is negative for every $t$. The
mechanism behind this concavity is that a tombstone arriving later tends to
strand its edges on nodes that are already dirty, hence already scheduled for
one republication. 
Reclaiming at any interval below saturation thus costs strictly less than the linear estimate that multiplies the per-deletion cost of the early rounds by the number of deletions.

\begin{remark}[The Saturating Form as an Empirical Regularity]
  \label{rem:tau}
  Section~\ref{sec:eval-reclaim} will fit this saturating form with $R^2 = 0.999$ and confirm its premise that dirty nodes saturate toward $L$ while dead edges grow linearly.
  The fitted time constant is smaller than the model value $\tau = L / \bar d$, because stranded in-edges concentrate on high-degree hubs rather than spreading uniformly, which saturates the dirty set faster than the uniform model assumes.
  We therefore read the saturating form as an empirical regularity whose mechanism the model explains, not as a quantitative predictor of $\tau$.
\end{remark}

\section{Evaluation}
\label{sec:eval}


\subsection{System Implementation}

ETALE is implemented in CUDA C++ for the NVIDIA Ampere architecture.
A node's neighbors occupy a slab of $R = 32$ entries, whose width matches the warp so that one warp scans a node's adjacency in a single coalesced step.
All updates go through the copy-on-write publication (Section~\ref{sec:cow}) that a single 64-bit compare-and-swap on the node reference commits.
The slabs are drawn from the two-stack pool that recycles memory at kernel boundaries.

As of June 2026, ETALE comprises about $7{,}200$ lines of code.
Of these, roughly $4{,}300$ lines of C++ and CUDA implement the index and its kernels, while the remaining $2{,}900$ lines of Python and shell run the experiments.
The source code and the scripts that reproduce every figure in this section are available at \url{https://github.com/hpdic/etale}.

\subsection{Experimental Setup}
\label{sec:eval-setup}

\paragraph{Baselines}
We compare ETALE against four graph indexes summarized in
Table~\ref{tab:baselines}. On the GPU axis, CAGRA~\cite{cagra} from the cuVS
library is a state-of-the-art static GPU graph index, and
Tagore~\cite{zli_sigmod25} is a more recent GPU library that accelerates the
construction of the same class of graphs. Neither has an incremental update
path, so each absorbs a round of churn by rebuilding the affected window from
scratch. On the CPU axis, HNSW~\cite{hnsw} from hnswlib and DIGRA~\cite{digra}
are dynamic graph indexes that support incremental insertion and deletion, but
run on the host and do not use the GPU. ETALE is the only index in the
comparison that maintains a graph incrementally on the GPU.

\begin{table*}[t]
  \centering
  \caption{Baselines along the axes relevant to dynamic GPU search.
  }
  \label{tab:baselines}
  \setlength{\tabcolsep}{5pt}
  \begin{tabular}{l c c c c c c c c c c}
    \toprule
    System                     & Hardware & Dynamic & Lock-free & Insert  & Delete    & Reclaim & Bounded & Filter & Concurrent & Provable \\
    \midrule
    CAGRA~\cite{cagra}         & GPU      & \xmark  & \xmark    & rebuild & rebuild   & --      & --      & \xmark & \xmark     & \xmark   \\
    Tagore~\cite{zli_sigmod25} & GPU      & \xmark  & \xmark    & rebuild & rebuild   & --      & --      & \xmark & \xmark     & \xmark   \\
    HNSW~\cite{hnsw}           & CPU      & \cmark  & \xmark    & \cmark  & tombstone & \xmark  & \xmark  & \xmark & \xmark     & \xmark   \\
    DIGRA~\cite{digra}         & CPU      & \cmark  & \xmark    & \cmark  & \cmark    & \cmark  & \cmark  & \cmark & \xmark     & \xmark   \\
    ETALE (ours)               & GPU      & \cmark  & \cmark    & \cmark  & repair    & \cmark  & \cmark  & \xmark & \cmark     & \cmark   \\
    \bottomrule
  \end{tabular}
\end{table*}

\paragraph{Metrics}
We report two metrics throughout the evaluation, maintenance cost and recall.
Maintenance cost is the wall-clock time to absorb one round of churn, measured
per round and averaged over rounds one through five with the initial build
excluded. 
For ETALE, this is the time of the incremental insert, delete, and
reclaim that the round triggers. 
For the static GPU baselines, it is the time of
the per-round rebuild. Recall is Recall@10, the fraction of the ten exact
nearest neighbors over the current live set that a query returns, 
which is averaged over
the held-out query set and measured after each round so that any drift from
accumulated updates is visible. We additionally track the resident slab or
element count to test the GPU memory footprint bound under sustained churn.

\paragraph{Datasets}
We evaluate on five public datasets that span image and audio modalities,
summarized in Table~\ref{tab:datasets}. Each dataset is indexed with the active
set listed in Table~\ref{tab:datasets} and queried with the dataset's standard
held-out query set. Recall@10 is measured against the exact nearest neighbors
over the current live set.
The churn workload makes the active working set, rather than the base pool, the
quantity that determines cost. Maintenance time, device memory, and recall all
follow the active set that resides on the GPU, while the base pool only supplies
fresh vectors for insertion and needs to be large enough that churn does not
exhaust it. We therefore fix the active set per dataset and draw insertions from
the remaining base. SIFT illustrates the relationship to billion-scale data. Our SIFT base is the
standard million-scale subset of the public SIFT1B set, with the active working
set drawn from it. A billion-scale base such as SIFT1B or
DEEP1B extends how long churn can run before reusing vectors without changing the
per-round maintenance load. We therefore use million-scale bases without loss of
representativeness for this workload.

\begin{table}[t]
  \centering
  \caption{Datasets used in the evaluation.
  }
  \label{tab:datasets}
  \begin{tabular}{llrrrr}
    \toprule
    Dataset                      & Modality & Dim. & Base size     & Active    & Churn   \\
    \midrule
    SIFT~\cite{sift}             & Image    & 128  & 1{,}000{,}000 & 200{,}000 & 2{,}000 \\
    DEEP~\cite{deep}             & Image    & 256  & 1{,}000{,}000 & 200{,}000 & 2{,}000 \\
    MSONG~\cite{msong}           & Audio    & 128  & 1{,}000{,}000 & 200{,}000 & 2{,}000 \\
    Notre~\cite{dpg_audio_notre} & Image    & 128  & 332{,}668     & 200{,}000 & 2{,}000 \\
    Audio~\cite{dpg_audio_notre} & Audio    & 192  & 54{,}387      & 38{,}948  & 2{,}000 \\
    \bottomrule
  \end{tabular}
\end{table}

\paragraph{Platform}
All experiments run on a server with two AMD EPYC 7763 64-core processors (256
hardware threads in total) and 503\,GB of RAM, hosting a single NVIDIA A100
80\,GB PCIe accelerator (driver 560.35.05). The GPU code is built with CUDA
12.6. The host code is built with GCC 13.3 under Ubuntu 24.04.4 (Linux kernel
6.8.0). We run CAGRA through the cuVS Python package and HNSW through the
hnswlib implementation bundled with DIGRA~\cite{digra}.

\paragraph{Workload}
We measure maintenance under a streaming churn workload. An index is first
built on the live set. Each subsequent round inserts and deletes an equal
number of vectors, set to one percent of the live set per round unless stated
otherwise. We record the per-round maintenance time and the post-round
Recall@10. The reported numbers are the mean over rounds one through five.
Round zero (warmup) is the initial build and is excluded from the maintenance means.

\subsection{End-to-end Comparison with GPU Indexes}
\label{sec:eval-gpu}

We first compare ETALE against the GPU graph indexes that share its hardware,
namely CAGRA~\cite{cagra} and Tagore~\cite{zli_sigmod25}. CAGRA is the
established state-of-the-art GPU graph index, and Tagore is a more recent GPU
library that accelerates the construction of the same class of graphs. Neither
has an incremental update path, so each churn round rebuilds the graph over the
live set. We report this comparison on SIFT, MSONG, Notre, and DEEP. Audio is
held out here because its active set holds only $38{,}948$ vectors, a scale at
which a full rebuild is already light enough that it no longer separates from
incremental maintenance for either GPU baseline. Figure~\ref{fig:maint-gpu}
reports per-round maintenance time across the four datasets.

\begin{figure}[t]
  \centering
  \includegraphics[width=\linewidth]{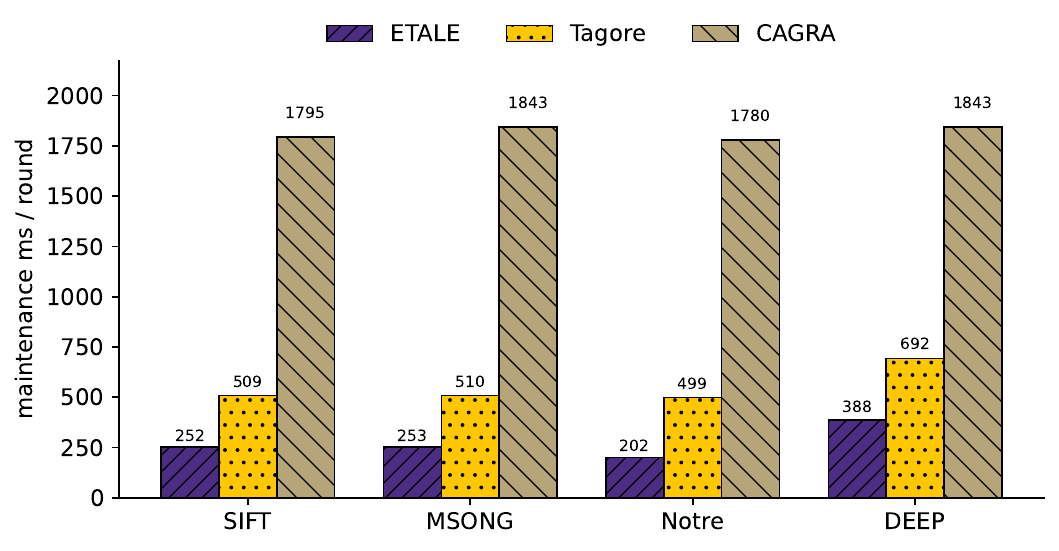}
  \caption{Per-round maintenance under $1\%$ churn on the GPU, ETALE
    against CAGRA and Tagore.
  }
  \label{fig:maint-gpu}
\end{figure}

ETALE maintains the index incrementally in $202$--$388$\,ms per round, whereas
CAGRA rebuilds in $1.78$--$1.84$\,s, a $4.8$--$8.8\times$ reduction. The gap is
largest on Notre ($8.8\times$) and smallest on the highest-dimensional set,
DEEP ($4.8\times$), where ETALE's own maintenance cost is highest. This speedup
holds at matched accuracy, with ETALE's Recall@10 ($0.953$--$0.988$) tracking
CAGRA's ($0.974$--$0.992$) within a few points across the four datasets.

Tagore rebuilds faster than CAGRA because its GPU-specific construction replaces the generic build pipeline with a two-phase descent for
k-NN graph initialization and parallel kernels for the pruning step;
yet ETALE still maintains the index $1.8$--$2.5\times$
faster than Tagore. The margin is smallest on DEEP ($1.8\times$) and largest on
Notre ($2.5\times$). Tagore's rebuild time is nearly flat at about $500$\,ms,
which is governed by the live-set size rather than the vector dimension. ETALE's
incremental cost instead falls with dimension and stays below Tagore's on every
dataset. ETALE thus outpaces not only CAGRA but also the faster rebuild that
Tagore represents, which confirms that the advantage comes from maintaining the
graph incrementally rather than from any one rebuild being slow.

\subsection{End-to-end Comparison with CPU Indexes}
\label{sec:eval-cpu}

The GPU comparison isolates the cost of rebuilding, but it leaves open whether
incremental maintenance is worth moving to GPUs at all. The natural point of
reference is the class of dynamic indexes that already support incremental
insertion and deletion on CPUs.
To answer this question, we compare ETALE against
HNSW~\cite{hnsw} and DIGRA~\cite{digra}, two graph indexes that maintain their
structure on the host under the same churn workload. 
A favorable comparison
here shows that the benefit of ETALE comes from GPU-resident maintenance, not
merely from avoiding rebuilds.

\begin{figure}[t]
  \centering
  \includegraphics[width=\linewidth]{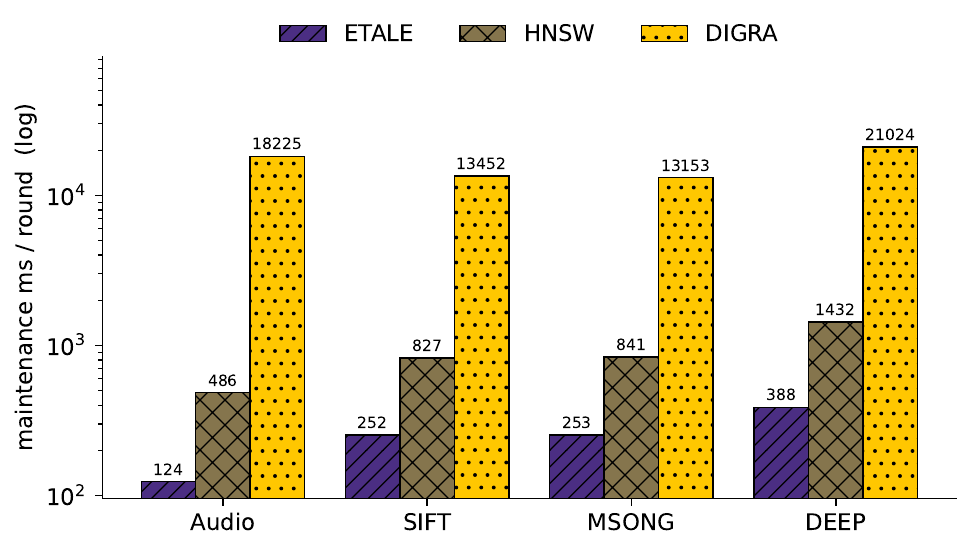}
  \caption{Per-round maintenance under 1\% churn, ETALE vs.\ CPU dynamic indexes
    HNSW and DIGRA, log scale.
  }
  \label{fig:maint-cpu}
\end{figure}

Figure~\ref{fig:maint-cpu} reports per-round maintenance on four datasets, with
every system held above $0.95$ Recall@10. ETALE maintains the index
$3.3$--$3.9\times$ faster than HNSW, the stronger of the two CPU baselines.
HNSW spends most of its maintenance time on a neighbor-heavy update path whose
throughput is bounded by CPU memory bandwidth, a limit that GPU-resident
adjacency avoids. The margin over DIGRA is far larger, at $52$--$147\times$.

DIGRA's higher maintenance cost can be explained by its original design
objective. It targets the more specific problem of range-filtered dynamic
search, where the periodic structural reorganization is the price of supporting
that filter. The comparison here isolates the narrower task of unfiltered
maintenance; therefore, the margin reflects this difference in scope rather
than a deficiency in DIGRA. This is also the reason why we omit Notre from the
DIGRA comparison: we could not find a DIGRA configuration that converges on it,
whereas the other systems all exceed $0.99$ Recall@10.

\subsection{Deletion: Memory and Navigability}
\label{sec:eval-memory}

A dynamic index must survive deletion on two fronts, the memory it holds as
nodes are removed and the navigability of the graph that the removals leave
behind. A graph index that deletes by tombstoning, as HNSW does with
\texttt{markDelete}, flips a bit and never frees the slot, leaving its resident
size to grow monotonically with cumulative insertions (i.e., unbounded memory growth). ETALE instead repairs
connectivity at deletion and reclaims retired slabs at a periodic interval,
which keeps its footprint bounded by the live-set size. We test the footprint
under balanced churn, then the recall rates under pure deletion.

\begin{figure}[t]
  \centering
  \includegraphics[width=\linewidth]{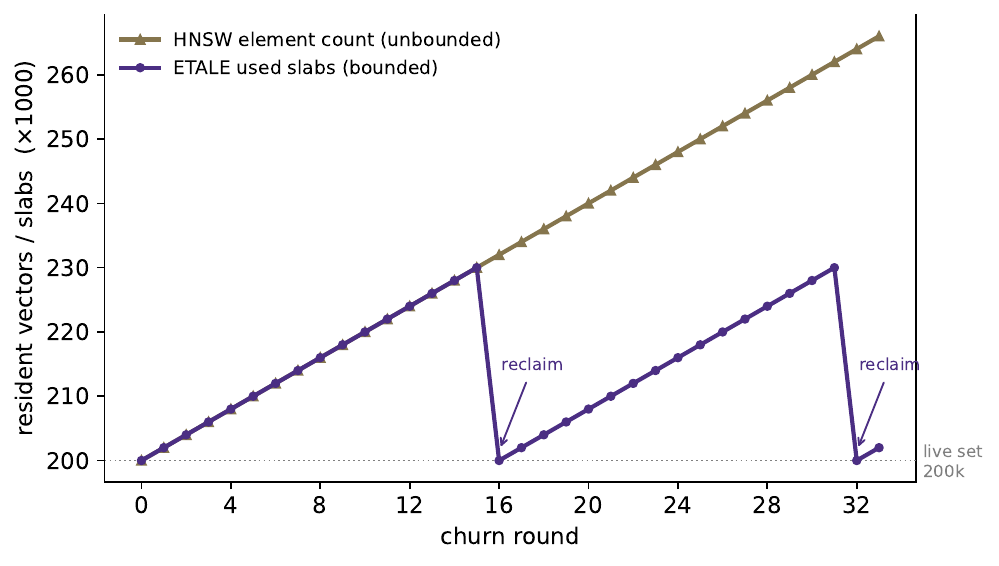}
  \caption{Resident vectors/slabs under churn on SIFT1M, both measured over $33$
    rounds at $1\%$ churn.}
  \label{fig:memory}
\end{figure}

Figure~\ref{fig:memory} contrasts the two footprints over a churn run. ETALE's
used-slab count climbs by the per-round insertion volume between reclaims and
drops back to the live-set size of $200$k whenever a reclaim fires, a sawtooth
whose ceiling the reclaim interval sets. The tombstone-only baseline never
returns to the live-set size and climbs without bound. By round $33$ HNSW holds
$266$k elements, a $33\%$ increase over the live set, against ETALE's exact
$200$k. This is the bounded-footprint property of Section~\ref{sec:theory} in
action, a consequence of the connectivity repair and reclamation ETALE performs.

\begin{figure}[t]
  \centering
  \includegraphics[width=\linewidth]{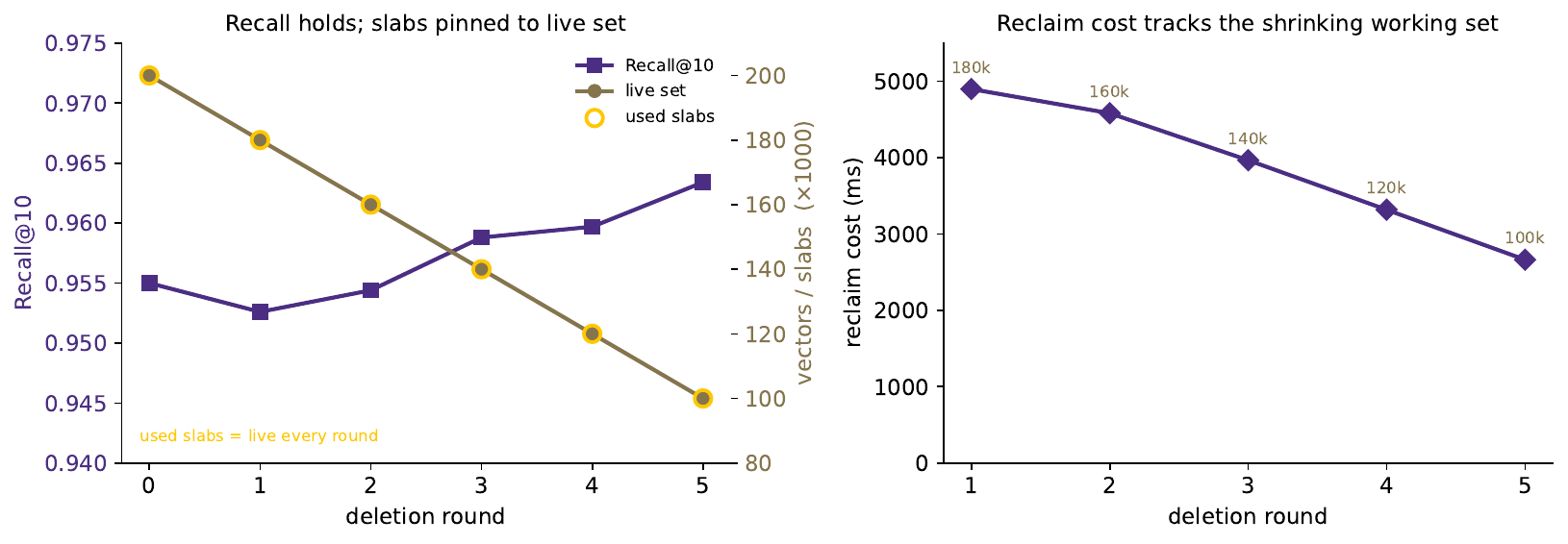}
  \caption{Pure-deletion stress on DEEP (delete $20$k/round, no insertions, to
    live = $100$k).}
  \label{fig:delete}
\end{figure}

Balanced churn refreshes the graph with new edges that can mask damage in the
delete path. A pure-deletion test removes that cover. We start from a
$200$k-vector index on DEEP and delete $20$k random live nodes per round with
connectivity repair and reclaim, no insertions, for five rounds until the live
set halves to $100$k. Recall then turns entirely on whether the delete path
preserves navigability.
Figure~\ref{fig:delete} reports recall, footprint,
and reclaim cost.

Recall survives the heavy deletion, rising slightly from $0.955$ at $200$k live
to $0.963$ at $100$k as a fixed query beam covers the smaller graph more
completely. The absence of decay with half the nodes gone is the point. The
deletion-monotonicity invariant of Proposition~\ref{prop:mono} underwrites it: a
deleted node is excluded from results but still relays traversal, its in-edges
repaired so a path through it is rebuilt around it rather than severed.

The footprint again holds to the live set, used slabs tracking it from $180$k
down to $100$k with no residual growth. Pure deletion is the strongest case of
the bound, the structure shrinking to exactly the working set. The reclaim cost
falls with that set, from $4.9$\,s at $180$k live to $2.7$\,s at $100$k, because
compaction repairs only the dirty live nodes and its cost follows the
working-set size rather than the cumulative deletions, matching the sublinear
characterization of Section~\ref{sec:eval-reclaim}.

\subsection{Query Latency under Concurrent Update}
\label{sec:eval-concurrent}

The results so far measure recall after each round completes, which leaves the
central concurrency claim untested at the system level. The wait-free reader
argument of Section~\ref{sec:cow} holds at the data-structure level, since a
reader's single atomic load names an immutable slab. This experiment tests
whether that property survives end to end, with queries in flight while the
graph mutates.

We run search and update on two CUDA streams that are never cross-synchronized.
The device is then free to overlap them. The query stream issues batches of
$1{,}000$ queries back to back and times each batch. The update stream
concurrently runs $200$ rounds of balanced delete and insert at $2{,}000$
operations each on DEEP. We measure the query-batch latency distribution in two
settings: quiescent with no update in flight, and concurrent with the update
stream running. Table~\ref{tab:concurrent} reports both. Recall is measured
against a fixed anchor set whose ground truth does not change under the churn,
so that a query overlapping an update is scored correct precisely when it
returns the true neighbors, independent of how far the update stream has
progressed.

\begin{table}[t]
  \centering
  \caption{Query latency on DEEP with and without a concurrent update stream, per batch of $1{,}000$ queries.}
  \label{tab:concurrent}
  \setlength{\tabcolsep}{8pt}
  \begin{tabular}{l c c c}
    \toprule
                           & p50        & p99        & Recall@10 \\
    \midrule
    Quiescent              & $15.9$\,ms & $16.6$\,ms & $0.954$   \\
    Concurrent with update & $16.6$\,ms & $16.8$\,ms & $0.954$   \\
    \bottomrule
  \end{tabular}
\end{table}

The query tail stays essentially flat under concurrent mutation.
The p99 batch latency rises from $16.6$\,ms quiescent to $16.8$\,ms with $200$ delete-and-insert rounds in flight, an inflation of $1.01\times$. The p50 rises by a comparable $0.7$\,ms.
Both shifts are small and point the same way, which is the signature of two kernels sharing the device rather than one blocking the other.
A reader that contended with writers for a lock would instead show a tail that grows with the update load.
The flat tail here reflects the wait-free path that a single atomic load gives every reader, where the residual cost is the device dividing its multiprocessors between the query stream and the update stream.

Recall holds at the quiescent level throughout the concurrent run.
Queries that overlap the update stream return Recall@10 of $0.954$ against the fixed anchor ground truth, which matches the $0.954$ measured with no update in flight.
A query that observed a torn adjacency, a half-installed slab, or a reclaimed slab would miss its true neighbors and depress this number.
Its stability confirms that the copy-on-write publication presents every concurrent reader a consistent graph.
The two measurements together move the concurrency property from the data-structure argument of Section~\ref{sec:cow} to an end-to-end system result.

\subsection{Sublinear Reclaim Cost}
\label{sec:eval-reclaim}

Section~\ref{sec:theory} claims that reclaim cost saturates as accumulated
tombstones grow. The cost is set by the number of dirty nodes rather than the
number of dead edges, where a dirty node is a live node with at least one dead
edge. Because dirty nodes saturate toward the live-set size while dead edges
grow linearly, the cost flattens even as tombstones keep accumulating. We
validate this by sweeping the reclaim interval and measuring a single reclaim
at each accumulated-tombstone level. 

Figure~\ref{fig:reclaim-fit} fits the
measured reclaim cost to the saturating form
\[
  \mathrm{reclaim}(t) = M \cdot \bigl(1 - e^{-t/\tau}\bigr),
\]
obtaining $M \approx 658$\,ms and $\tau \approx 1.6\times10^{4}$ tombstones at
$R^2 = 0.999$, where $t$ is the number of accumulated tombstones and $e$ is the
base of the natural logarithm. Figure~\ref{fig:reclaim-mech} shows the
mechanism behind the fit, with dirty nodes saturating toward the $200$k live
set while dead edges grow linearly.

\begin{figure}[t]
  \centering
  \begin{subfigure}{0.49\linewidth}
    \includegraphics[width=\linewidth]{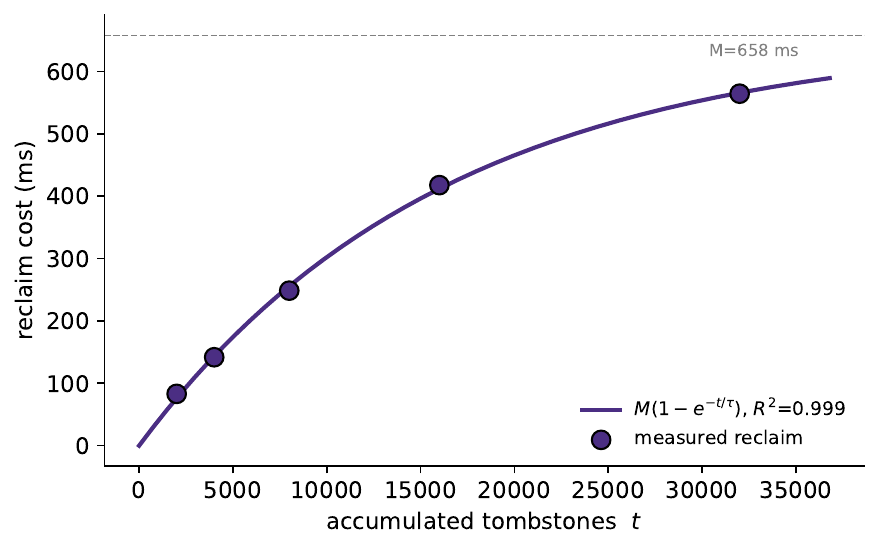}
    \caption{Measured reclaim cost with saturating fit ($R^2=0.999$).}
    \label{fig:reclaim-fit}
  \end{subfigure}\hfill
  \begin{subfigure}{0.49\linewidth}
    \includegraphics[width=\linewidth]{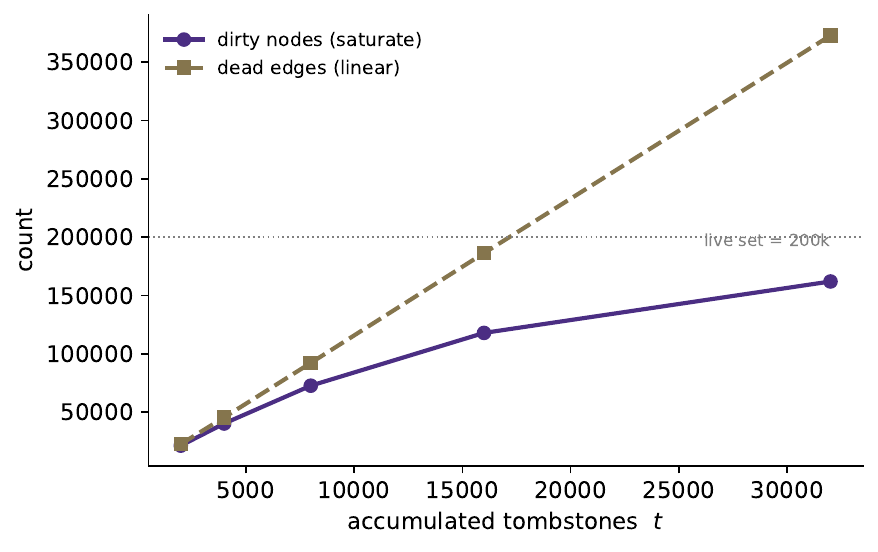}
    \caption{Dirty nodes saturate on live set; dead edges grow linearly.}
    \label{fig:reclaim-mech}
  \end{subfigure}
  \caption{Reclaim cost vs.\ accumulated tombstones.
  }
  \label{fig:reclaim}
\end{figure}

The ceiling $M$ is the cost of rebuilding the live set. In the limit where
every live node is dirty, reclaim degenerates to a full rebuild. The fit
recovers exactly this bound. The reclaim interval is in practice chosen well
below saturation, so each reclaim costs a small fraction of $M$ while keeping
the footprint bounded (Section~\ref{sec:eval-memory}).

\subsection{Recall Stability}
\label{sec:eval-stability}

A dynamic index is useful only if its accuracy does not erode as updates
accumulate. A lazy-deletion scheme leaves tombstones that distort the neighbor
graph. Recall falls round after round as the tombstones accumulate. ETALE
avoids this decay because its connectivity-repairing deletion and periodic
reclaim keep the maintained graph faithful. 
We expect that Recall@10 of ETALE would stay flat across
churn rounds.

\begin{figure}[t]
  \centering
  \begin{subfigure}{0.49\linewidth}
    \includegraphics[width=\linewidth]{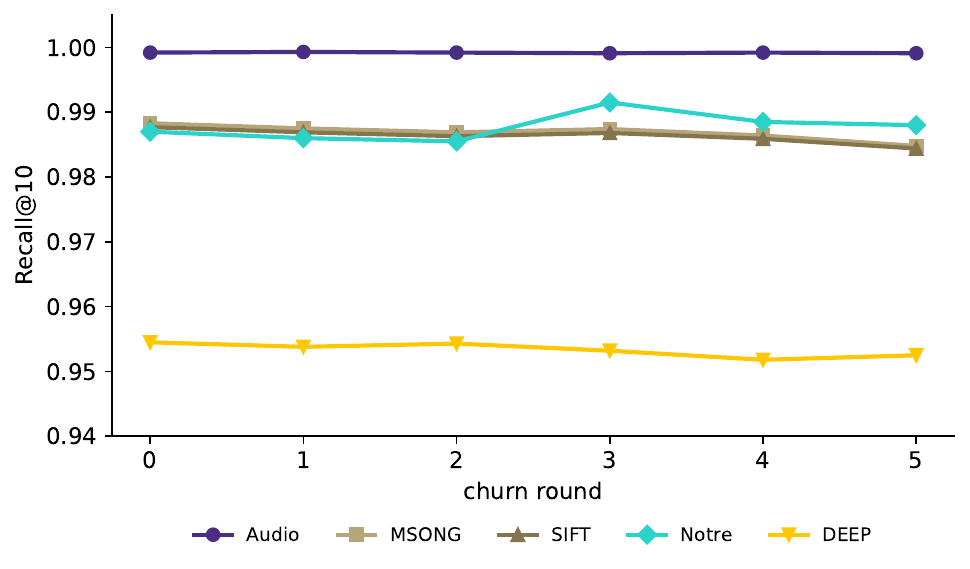}
    \caption{Across five datasets (fixed $200$k live set, $1\%$ churn/round).}
    \label{fig:stab-datasets}
  \end{subfigure}\hfill
  \begin{subfigure}{0.49\linewidth}
    \includegraphics[width=\linewidth]{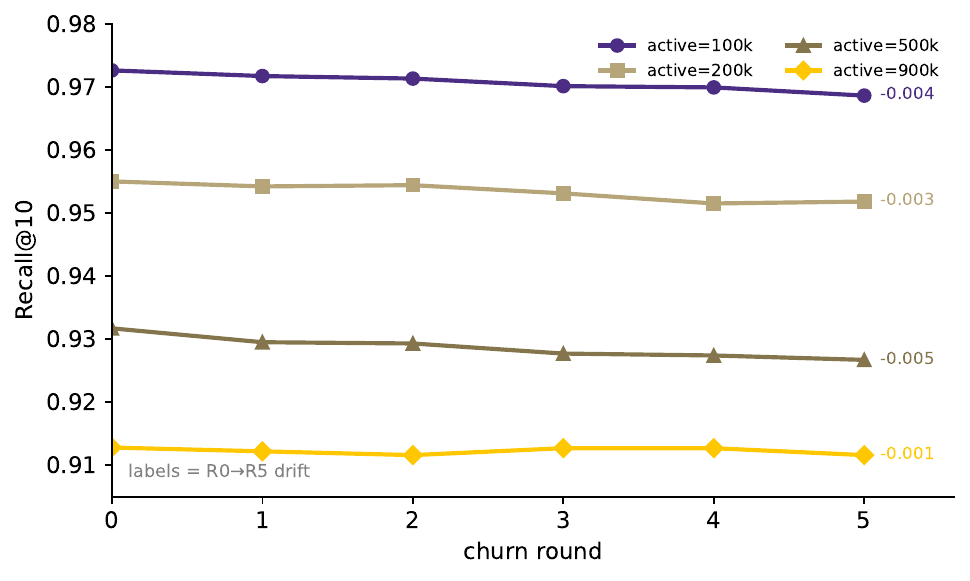}
    \caption{Across index scales on DEEP (active $100$k--$900$k).}
    \label{fig:stab-scale}
  \end{subfigure}
  \caption{Recall@10 over churn rounds stays flat as inserts and deletes
    accumulate, across both datasets and index sizes.
  }
  \label{fig:stability}
\end{figure}

Figure~\ref{fig:stab-datasets} tracks Recall@10 over five churn rounds on the
five datasets at a fixed $200$k live set. Every curve is flat with the largest
drift over five rounds at $-0.0035$ on MSONG and recall on Notre essentially
unchanged at $+0.0010$. These five datasets span image and audio
modalities across a range of dimensions, which demonstrates ETALE's stability.

Figure~\ref{fig:stab-scale} repeats the measurement on DEEP at four index sizes
spanning nearly an order of magnitude, with active sets of $100$k, $200$k,
$500$k, and $900$k under $1\%$ churn. Within each scale the curve is again
flat, with round-0-to-round-5 drift at most $-0.005$ at $500$k and as little as
$-0.0012$ at the largest size. The scales sit at different absolute recall
because a fixed query beam covers a larger graph less completely. This offset
is a static search-effort effect that does not reflect maintenance decay.

\subsection{Parameter Sensitivity}
\label{sec:eval-sensitivity}

The results so far fix the query beam width and the churn rate at single
values. This section varies each in turn, the query beam width that sets the
recall a query reaches and the churn rate that sets the volume of each update
round.
  
\begin{figure*}[t]
  \centering
  \begin{subfigure}{0.32\linewidth}
    \includegraphics[width=\linewidth]{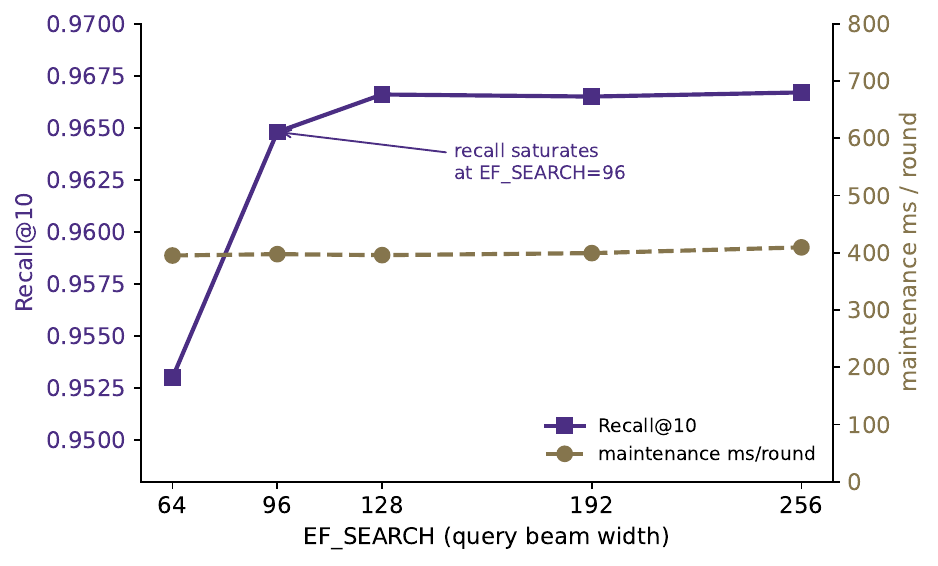}
    \caption{$\mathrm{EF\_SEARCH}$ sweep on DEEP. Recall@10 rises to a
      ceiling at $\mathrm{EF\_SEARCH}=96$ while maintenance cost stays flat.}
    \label{fig:ef-search}
  \end{subfigure}\hfill
  \begin{subfigure}{0.32\linewidth}
    \includegraphics[width=\linewidth]{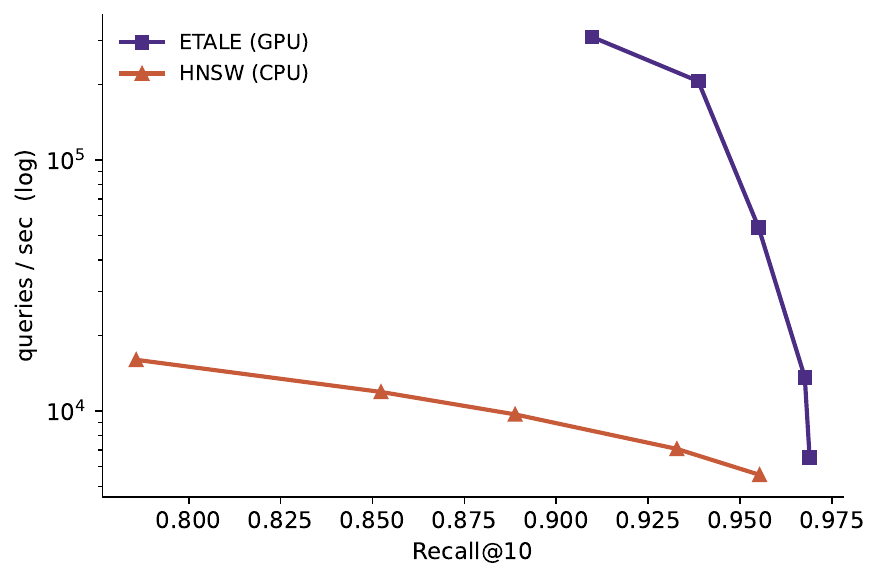}
    \caption{Recall--throughput on DEEP, ETALE (GPU) against HNSW (CPU),
      each sweeping its query beam. Throughput axis is log-scaled.}
    \label{fig:qps}
  \end{subfigure}\hfill
  \begin{subfigure}{0.32\linewidth}
    \includegraphics[width=\linewidth]{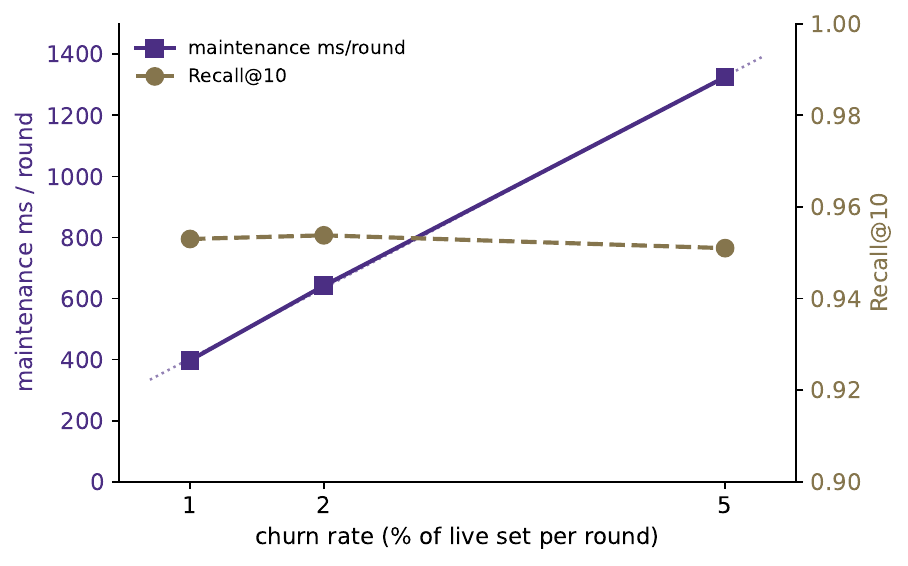}
    \caption{Churn-rate sweep on DEEP. Maintenance grows near-linearly
      ($398\to1325$\,ms over $1\%\to5\%$) while Recall@10 stays flat.}
    \label{fig:churn-rate}
  \end{subfigure}
  \caption{Parameter sensitivity of ETALE on DEEP.}
  \label{fig:sensitivity}
\end{figure*}

The query beam width is a free knob because ETALE decouples it from graph
construction. The build-time beam width $\mathrm{EF\_BUILD}$ fixes graph quality
once, whereas the query-time beam width $\mathrm{EF\_SEARCH}$ is paid on every
query and chosen independently of it. Recall is thus a query-time choice rather
than a property baked into the graph, and ETALE's fast maintenance leaves ample
headroom to widen the beam.
Figure~\ref{fig:ef-search} sweeps $\mathrm{EF\_SEARCH}$ on DEEP at fixed
$\mathrm{EF\_BUILD}$. Recall rises from $0.91$ at $\mathrm{EF\_SEARCH}=32$ to a
ceiling near $0.969$ at $\mathrm{EF\_SEARCH}=96$, while per-query maintenance
cost stays close to $400$\,ms throughout. The same operating points trace the
recall--throughput curve of Figure~\ref{fig:qps}, every one served by the same
maintained index, so accuracy against throughput is chosen at query time
without rebuilding.

Figure~\ref{fig:qps} places that curve against HNSW, the dynamic index ETALE
competes with on the query side, over the recall range where both operate as
practical search engines. ETALE answers queries at $5$--$20\times$ the
throughput of HNSW at matched recall, since its search runs on the GPU while
HNSW traverses the graph on the host. At Recall@10 of $0.955$, ETALE serves
about $5.4{\times}10^{4}$ queries/s against HNSW's $5.6{\times}10^{3}$. The
static GPU index CAGRA is not drawn here. It maintains no graph online and
is not a dynamic-index query peer; its comparison with ETALE is on maintenance
(Section~\ref{sec:eval-gpu}). ETALE is in this sense a dynamic-first index,
spending the static query specialization that a build-once index can pursue in
order to keep the graph mutable on the GPU, while still answering queries faster
than the dynamic baseline that shares its workload.

The churn rate sets how much each round changes. Figure~\ref{fig:churn-rate}
sweeps it over $1$, $2$, and $5\%$ of the live set per round on DEEP. Per-round
maintenance grows near-linearly with the churn volume, from $398$\,ms at $1\%$
to $1325$\,ms at $5\%$, while Recall@10 stays flat at $0.951$--$0.954$.
Maintenance cost is thus predictable during update.

\subsection{Ablation Studies}
\label{sec:eval-ablation}

The bounded footprint of Section~\ref{sec:eval-memory} is the product of the
periodic reclaim. This experiment isolates what that reclaim costs and what it
buys. We run the balanced churn workload on DEEP for five rounds with reclaim
enabled and disabled, holding every other setting fixed.
Table~\ref{tab:ablation} reports the resident slab count, the per-round reclaim
time, and recall.

\begin{table}[t]
  \centering
  \caption{Reclamation ablation under $1\%$ balanced churn.
  }
  \label{tab:ablation}
  \setlength{\tabcolsep}{6pt}
  \begin{tabular}{l c c}
    \toprule
    Metric                   & reclaim on       & reclaim off      \\
    \midrule
    Footprint after 5 rounds & $200$k (live)    & $210$k (growing) \\
    Footprint trend          & bounded          & unbounded        \\
    Reclaim time / round     & ${\sim}1150$\,ms & $0$              \\
    Maintenance / round      & $592$\,ms        & $579$\,ms        \\
    Recall@10                & $0.951$          & $0.953$          \\
    \bottomrule
  \end{tabular}
\end{table}

Reclaim does not change recall much, since it frees the slabs of already-tombstoned
nodes and never touches the live graph that search traverses. 
However, reclaim does impact the memory footprint. 
With reclaim on, the resident slab count stays pinned at the
live-set size of $200$k across all five rounds. With reclaim off, every round
adds its churn volume to the footprint without ever returning it, so the count
climbs linearly to $210$k after five rounds and would grow without bound over a
longer run. The cost of holding the footprint at the live-set size is therefore
one reclaim pass per round, on DEEP about $1150$\,ms, paid against a
maintenance step of comparable magnitude. This cost is itself bounded and
sublinear in accumulated tombstones, as Section~\ref{sec:eval-reclaim}
establishes. The reclaim interval then trades peak footprint against amortized
reclaim cost without either growing unbounded.

\section{Conclusion}
\label{sec:conclusion}

Dynamic ANN graphs on the GPU have lacked an index that maintains online and
whose deletion is provably monotone and whose footprint is provably bounded.
ETALE fills this gap with a lock-free copy-on-write slab structure, whose
deletion bit is inlined into the single word that every publication swaps by
compare-and-swap. 
On this structure, we prove deletion monotonicity, reclaim
safety, and a footprint bounded by the live-set size, and we establish a
reclaim cost that, under a uniform-placement model, is sublinear in accumulated
tombstones and ceilinged by the cost of rebuilding the live set. The
implementation maintains the index $4.8$--$8.8\times$ faster than CAGRA and
$1.8$--$2.5\times$ faster than the more recent Tagore at their per-window
rebuild, orders of magnitude faster than CPU dynamic indexes, at matched recall
and with a footprint that stays bounded where tombstone-only systems grow
indefinitely.


\section*{Disclosure of AI Usage}
System implementation and manuscript preparation were assisted by Copilot and Claude.
However, all the key components of this work, including but not limited to: design of data structures and algorithms, theoretical analysis, evaluation methods, were developed by the authors independently.

\begin{acks}
  Results presented in this paper were obtained using the Chameleon testbed supported by the National Science Foundation.
\end{acks}

\bibliographystyle{ACM-Reference-Format}
\bibliography{refs}

@inproceedings{dzhao_hpdc26,
  author    = {Zhao, Dongfang},
  title     = {{SIVF}: {GPU}-Resident {IVF} Index for Streaming Vector Analytics},
  booktitle = {Proceedings of the 35th International Symposium on High-Performance Parallel and Distributed Computing (HPDC '26)},
  year      = {2026},
  location  = {Cleveland, OH, USA},
  publisher = {Association for Computing Machinery},
  address   = {New York, NY, USA},
  pages     = {1--14},
  doi       = {10.1145/3806645.3807575},
  isbn      = {979-8-4007-2640-8},
}

@inproceedings{jwang_sigmod21,
author = {Wang, Jianguo and Yi, Xiaomeng and Guo, Rentong and Jin, Hai and Xu, Peng and Li, Shengjun and Wang, Xiangyu and Guo, Xiangzhou and Li, Chengming and Xu, Xiaohai and Yu, Kun and Yuan, Yuxing and Zou, Yinghao and Long, Jiquan and Cai, Yudong and Li, Zhenxiang and Zhang, Zhifeng and Mo, Yihua and Gu, Jun and Jiang, Ruiyi and Wei, Yi and Xie, Charles},
title = {Milvus: A Purpose-Built Vector Data Management System},
year = {2021},
isbn = {9781450383431},
publisher = {Association for Computing Machinery},
address = {New York, NY, USA},
url = {https://doi.org/10.1145/3448016.3457550},
doi = {10.1145/3448016.3457550},
booktitle = {Proceedings of the 2021 International Conference on Management of Data},
pages = {2614–2627},
numpages = {14},
location = {Virtual Event, China},
series = {SIGMOD '21}
}

@inproceedings{plewi_neurips20,
author = {Lewis, Patrick and Perez, Ethan and Piktus, Aleksandra and Petroni, Fabio and Karpukhin, Vladimir and Goyal, Naman and K\"{u}ttler, Heinrich and Lewis, Mike and Yih, Wen-tau and Rockt\"{a}schel, Tim and Riedel, Sebastian and Kiela, Douwe},
title = {Retrieval-augmented generation for knowledge-intensive NLP tasks},
year = {2020},
isbn = {9781713829546},
publisher = {Curran Associates Inc.},
address = {Red Hook, NY, USA},
booktitle = {Proceedings of the 34th International Conference on Neural Information Processing Systems},
articleno = {793},
numpages = {16},
location = {Vancouver, BC, Canada},
series = {NIPS '20}
}

@inproceedings{rying_kdd18,
author = {Ying, Rex and He, Ruining and Chen, Kaifeng and Eksombatchai, Pong and Hamilton, William L. and Leskovec, Jure},
title = {Graph Convolutional Neural Networks for Web-Scale Recommender Systems},
year = {2018},
isbn = {9781450355520},
publisher = {Association for Computing Machinery},
address = {New York, NY, USA},
url = {https://doi.org/10.1145/3219819.3219890},
doi = {10.1145/3219819.3219890},
booktitle = {Proceedings of the 24th ACM SIGKDD International Conference on Knowledge Discovery \& Data Mining},
pages = {974–983},
numpages = {10},
location = {London, United Kingdom},
series = {KDD '18}
}

@inproceedings{gblel_spaa20,
author = {Blelloch, Guy E. and Anderson, Daniel and Dhulipala, Laxman},
title = {ParlayLib - A Toolkit for Parallel Algorithms on Shared-Memory Multicore Machines},
year = {2020},
isbn = {9781450369350},
publisher = {Association for Computing Machinery},
address = {New York, NY, USA},
url = {https://doi.org/10.1145/3350755.3400254},
doi = {10.1145/3350755.3400254},
booktitle = {Proceedings of the 32nd ACM Symposium on Parallelism in Algorithms and Architectures},
pages = {507–509},
numpages = {3},
location = {Virtual Event, USA},
series = {SPAA '20}
}

@article{jalgl_asplos15,
author = {Alglave, Jade and Batty, Mark and Donaldson, Alastair F. and Gopalakrishnan, Ganesh and Ketema, Jeroen and Poetzl, Daniel and Sorensen, Tyler and Wickerson, John},
title = {GPU Concurrency: Weak Behaviours and Programming Assumptions},
year = {2015},
issue_date = {April 2015},
publisher = {Association for Computing Machinery},
address = {New York, NY, USA},
volume = {50},
number = {4},
issn = {0362-1340},
url = {https://doi.org/10.1145/2775054.2694391},
doi = {10.1145/2775054.2694391},
journal = {SIGPLAN Not.},
month = mar,
pages = {577–591},
numpages = {15}
}

@inproceedings{ysun_cikm24,
author = {Sun, Yiping and Shi, Yang and Du, Jiaolong},
title = {A Real-Time Adaptive Multi-Stream GPU System For Online Approximate Nearest Neighborhood Search},
year = {2024},
isbn = {9798400704369},
publisher = {Association for Computing Machinery},
address = {New York, NY, USA},
url = {https://doi.org/10.1145/3627673.3680054},
doi = {10.1145/3627673.3680054},
booktitle = {Proceedings of the 33rd ACM International Conference on Information and Knowledge Management},
pages = {4906–4913},
numpages = {8},
location = {Boise, ID, USA},
series = {CIKM '24}
}

@article{yzhu_sigmod24,
author = {Zhu, Yifan and Ma, Ruiyao and Zheng, Baihua and Ke, Xiangyu and Chen, Lu and Gao, Yunjun},
title = {GTS: GPU-based Tree Index for Fast Similarity Search},
year = {2024},
issue_date = {June 2024},
publisher = {Association for Computing Machinery},
address = {New York, NY, USA},
volume = {2},
number = {3},
url = {https://doi.org/10.1145/3654945},
doi = {10.1145/3654945},
journal = {Proc. ACM Manag. Data},
month = may,
articleno = {142},
numpages = {27}
}

@inproceedings{hwang_cikm21,
author = {Wang, Hui and Zhao, Wan-Lei and Zeng, Xiangxiang and Yang, Jianye},
title = {Fast k-NN Graph Construction by GPU based NN-Descent},
year = {2021},
isbn = {9781450384469},
publisher = {Association for Computing Machinery},
address = {New York, NY, USA},
url = {https://doi.org/10.1145/3459637.3482344},
doi = {10.1145/3459637.3482344},
booktitle = {Proceedings of the 30th ACM International Conference on Information \& Knowledge Management},
pages = {1929–1938},
numpages = {10},
location = {Virtual Event, Queensland, Australia},
series = {CIKM '21}
}

@article{zli_sigmod25,
author = {Li, Zhonggen and Ke, Xiangyu and Zhu, Yifan and Yu, Bocheng and Zheng, Baihua and Gao, Yunjun},
title = {Scalable Graph Indexing using GPUs for Approximate Nearest Neighbor Search},
year = {2025},
issue_date = {December 2025},
publisher = {Association for Computing Machinery},
address = {New York, NY, USA},
volume = {3},
number = {6},
url = {https://doi.org/10.1145/3769825},
doi = {10.1145/3769825},
journal = {Proc. ACM Manag. Data},
month = dec,
articleno = {360},
numpages = {27}
}

@ARTICLE{kvenk_tbd25,
  author={Venkatasubba, Karthik and Khan, Saim and Singh, Somesh and Simhadri, Harsha Vardhan and Vedurada, Jyothi},
  journal={IEEE Transactions on Big Data}, 
  title={BANG: Billion-Scale Approximate Nearest Neighbour Search Using a Single GPU}, 
  year={2025},
  volume={11},
  number={6},
  pages={3142-3157},
  doi={10.1109/TBDATA.2025.3581085}}

@INPROCEEDINGS{yyu_icde22,
author = { Yu, Yuanhang and Wen, Dong and Zhang, Ying and Qin, Lu and Zhang, Wenjie and Lin, Xuemin },
booktitle = { 2022 IEEE 38th International Conference on Data Engineering (ICDE) },
title = {{ GPU-accelerated Proximity Graph Approximate Nearest Neighbor Search and Construction }},
year = {2022},
volume = {},
ISSN = {},
pages = {552-564},
doi = {10.1109/ICDE53745.2022.00046},
url = {https://doi.ieeecomputersociety.org/10.1109/ICDE53745.2022.00046},
publisher = {IEEE Computer Society},
address = {Los Alamitos, CA, USA},
month =May}

@inproceedings{wzhao_icde20,
  author       = {Weijie Zhao and
                  Shulong Tan and
                  Ping Li},
  title        = {{SONG:} Approximate Nearest Neighbor Search on {GPU}},
  booktitle    = {36th {IEEE} International Conference on Data Engineering, {ICDE} 2020,
                  Dallas, TX, USA, April 20-24, 2020},
  pages        = {1033--1044},
  publisher    = {{IEEE}},
  year         = {2020},
  url          = {https://doi.org/10.1109/ICDE48307.2020.00094},
  doi          = {10.1109/ICDE48307.2020.00094},
  bibsource    = {dblp computer science bibliography, https://dblp.org}
}

@article{mwang_sigmod24,
author = {Wang, Mengzhao and Xu, Weizhi and Yi, Xiaomeng and Wu, Songlin and Peng, Zhangyang and Ke, Xiangyu and Gao, Yunjun and Xu, Xiaoliang and Guo, Rentong and Xie, Charles},
title = {Starling: An I/O-Efficient Disk-Resident Graph Index Framework for High-Dimensional Vector Similarity Search on Data Segment},
year = {2024},
issue_date = {February 2024},
publisher = {Association for Computing Machinery},
address = {New York, NY, USA},
volume = {2},
number = {1},
url = {https://doi.org/10.1145/3639269},
doi = {10.1145/3639269},
journal = {Proc. ACM Manag. Data},
month = mar,
articleno = {14},
numpages = {27}
}

@article{cwei_vldb20,
author = {Wei, Chuangxian and Wu, Bin and Wang, Sheng and Lou, Renjie and Zhan, Chaoqun and Li, Feifei and Cai, Yuanzhe},
title = {AnalyticDB-V: a hybrid analytical engine towards query fusion for structured and unstructured data},
year = {2020},
issue_date = {August 2020},
publisher = {VLDB Endowment},
volume = {13},
number = {12},
issn = {2150-8097},
url = {https://doi.org/10.14778/3415478.3415541},
doi = {10.14778/3415478.3415541},
journal = {Proc. VLDB Endow.},
month = aug,
pages = {3152–3165},
numpages = {14}
}

@inproceedings{qzhan_osdi23,
author = {Qianxi Zhang and Shuotao Xu and Qi Chen and Guoxin Sui and Jiadong Xie and Zhizhen Cai and Yaoqi Chen and Yinxuan He and Yuqing Yang and Fan Yang and Mao Yang and Lidong Zhou},
title = {{VBASE}: Unifying Online Vector Similarity Search and Relational Queries via Relaxed Monotonicity},
booktitle = {17th USENIX Symposium on Operating Systems Design and Implementation (OSDI 23)},
year = {2023},
isbn = {978-1-939133-34-2},
address = {Boston, MA},
pages = {377--395},
url = {https://www.usenix.org/conference/osdi23/presentation/zhang-qianxi},
publisher = {USENIX Association},
month = jul
}

@article{lpate_sigmod24,
author = {Patel, Liana and Kraft, Peter and Guestrin, Carlos and Zaharia, Matei},
title = {ACORN: Performant and Predicate-Agnostic Search Over Vector Embeddings and Structured Data},
year = {2024},
issue_date = {June 2024},
publisher = {Association for Computing Machinery},
address = {New York, NY, USA},
volume = {2},
number = {3},
url = {https://doi.org/10.1145/3654923},
doi = {10.1145/3654923},
journal = {Proc. ACM Manag. Data},
month = may,
articleno = {120},
numpages = {27}
}

@inproceedings{sgoll_www23,
author = {Gollapudi, Siddharth and Karia, Neel and Sivashankar, Varun and Krishnaswamy, Ravishankar and Begwani, Nikit and Raz, Swapnil and Lin, Yiyong and Zhang, Yin and Mahapatro, Neelam and Srinivasan, Premkumar and Singh, Amit and Simhadri, Harsha Vardhan},
title = {Filtered-DiskANN: Graph Algorithms for Approximate Nearest Neighbor Search with Filters},
year = {2023},
isbn = {9781450394161},
publisher = {Association for Computing Machinery},
address = {New York, NY, USA},
url = {https://doi.org/10.1145/3543507.3583552},
doi = {10.1145/3543507.3583552},
booktitle = {Proceedings of the ACM Web Conference 2023},
pages = {3406–3416},
numpages = {11},
location = {Austin, TX, USA},
series = {WWW '23}
}

@INPROCEEDINGS{dbara_iccv23,
  author={Baranchuk, Dmitry and Douze, Matthijs and Upadhyay, Yash and Yalniz, I. Zeki},
  booktitle={2023 IEEE/CVF International Conference on Computer Vision (ICCV)}, 
  title={DeDrift: Robust Similarity Search under Content Drift}, 
  year={2023},
  volume={},
  number={},
  pages={10992-11001},
  doi={10.1109/ICCV51070.2023.01012}}

@inproceedings{jruan_kdd25,
author = {Ruan, Jiancheng and Chen, Tingyang and Yang, Renchi and Ke, Xiangyu and Gao, Yunjun},
title = {Empowering Graph-based Approximate Nearest Neighbor Search with Adaptive Awareness Capabilities},
year = {2025},
isbn = {9798400714542},
publisher = {Association for Computing Machinery},
address = {New York, NY, USA},
url = {https://doi.org/10.1145/3711896.3736930},
doi = {10.1145/3711896.3736930},
booktitle = {Proceedings of the 31st ACM SIGKDD Conference on Knowledge Discovery and Data Mining V.2},
pages = {2444–2454},
numpages = {11},
location = {Toronto ON, Canada},
series = {KDD '25}
}

@article{syu_vldb25,
author = {Yu, Song and Lin, Shengyuan and Gong, Shufeng and Xie, Yongqing and Liu, Ruicheng and Zhou, Yijie and Sun, Ji and Zhang, Yanfeng and Li, Guoliang and Yu, Ge},
title = {A Topology-Aware Localized Update Strategy for Graph-Based ANN Index},
year = {2025},
issue_date = {November 2025},
publisher = {VLDB Endowment},
volume = {19},
number = {3},
issn = {2150-8097},
url = {https://doi.org/10.14778/3778092.3778108},
doi = {10.14778/3778092.3778108},
journal = {Proc. VLDB Endow.},
month = nov,
pages = {495–508},
numpages = {14}
}

@article{dliu_vldb25,
author = {Liu, Dawei and Zheng, Bolong and Yue, Ziyang and Ruan, Fuhao and Zhou, Xiaofang and Jensen, Christian S.},
title = {Wolverine: Highly Efficient Monotonic Search Path Repair for Graph-Based ANN Index Updates},
year = {2025},
issue_date = {March 2025},
publisher = {VLDB Endowment},
volume = {18},
number = {7},
issn = {2150-8097},
url = {https://doi.org/10.14778/3734839.3734860},
doi = {10.14778/3734839.3734860},
journal = {Proc. VLDB Endow.},
month = mar,
pages = {2268–2280},
numpages = {13}
}

@inproceedings{ypan_bigdata23,
  author       = {Yu Pan and
                  Jianxin Sun and
                  Hongfeng Yu},
  editor       = {Jingrui He and
                  Themis Palpanas and
                  Xiaohua Hu and
                  Alfredo Cuzzocrea and
                  Dejing Dou and
                  Dominik Slezak and
                  Wei Wang and
                  Aleksandra Gruca and
                  Jerry Chun{-}Wei Lin and
                  Rakesh Agrawal},
  title        = {LM-DiskANN: Low Memory Footprint in Disk-Native Dynamic Graph-Based
                  {ANN} Indexing},
  booktitle    = {{IEEE} International Conference on Big Data, BigData 2023, Sorrento,
                  Italy, December 15-18, 2023},
  pages        = {5987--5996},
  publisher    = {{IEEE}},
  year         = {2023},
  url          = {https://doi.org/10.1109/BigData59044.2023.10386517},
  doi          = {10.1109/BIGDATA59044.2023.10386517},
  bibsource    = {dblp computer science bibliography, https://dblp.org}
}

@misc{hxu_arxiv25,
      title={In-Place Updates of a Graph Index for Streaming Approximate Nearest Neighbor Search}, 
      author={Haike Xu and Magdalen Dobson Manohar and Philip A. Bernstein and Badrish Chandramouli and Richard Wen and Harsha Vardhan Simhadri},
      year={2025},
      eprint={2502.13826},
      archivePrefix={arXiv},
      primaryClass={cs.IR},
      url={https://arxiv.org/abs/2502.13826}, 
}

@inproceedings{yxu_sosp23,
author = {Xu, Yuming and Liang, Hengyu and Li, Jin and Xu, Shuotao and Chen, Qi and Zhang, Qianxi and Li, Cheng and Yang, Ziyue and Yang, Fan and Yang, Yuqing and Cheng, Peng and Yang, Mao},
title = {SPFresh: Incremental In-Place Update for Billion-Scale Vector Search},
year = {2023},
isbn = {9798400702297},
publisher = {Association for Computing Machinery},
address = {New York, NY, USA},
url = {https://doi.org/10.1145/3600006.3613166},
doi = {10.1145/3600006.3613166},
booktitle = {Proceedings of the 29th Symposium on Operating Systems Principles},
pages = {545–561},
numpages = {17},
location = {Koblenz, Germany},
series = {SOSP '23}
}

@misc{asing_arxiv21,
      title={FreshDiskANN: A Fast and Accurate Graph-Based ANN Index for Streaming Similarity Search}, 
      author={Aditi Singh and Suhas Jayaram Subramanya and Ravishankar Krishnaswamy and Harsha Vardhan Simhadri},
      year={2021},
      eprint={2105.09613},
      archivePrefix={arXiv},
      primaryClass={cs.IR},
      url={https://arxiv.org/abs/2105.09613}, 
}

@misc{mdouz_arxiv24,
      title={The Faiss library}, 
      author={Matthijs Douze and Alexandr Guzhva and Chengqi Deng and Jeff Johnson and Gergely Szilvasy and Pierre-Emmanuel Mazaré and Maria Lomeli and Lucas Hosseini and Hervé Jégou},
      year={2025},
      eprint={2401.08281},
      archivePrefix={arXiv},
      primaryClass={cs.LG},
      url={https://arxiv.org/abs/2401.08281}, 
}

@InProceedings{jmart_eccv18,
author = {Martinez, Julieta and Zakhmi, Shobhit and Hoos, Holger H. and Little, James J.},
title = {LSQ++: Lower running time and higher recall in multi-codebook quantization},
booktitle = {Proceedings of the European Conference on Computer Vision (ECCV)},
month = {September},
year = {2018}
}

@InProceedings{rguo_icml20,
  title = 	 {Accelerating Large-Scale Inference with Anisotropic Vector Quantization},
  author =       {Guo, Ruiqi and Sun, Philip and Lindgren, Erik and Geng, Quan and Simcha, David and Chern, Felix and Kumar, Sanjiv},
  booktitle = 	 {Proceedings of the 37th International Conference on Machine Learning},
  pages = 	 {3887--3896},
  year = 	 {2020},
  editor = 	 {III, Hal Daumé and Singh, Aarti},
  volume = 	 {119},
  series = 	 {Proceedings of Machine Learning Research},
  month = 	 {13--18 Jul},
  publisher =    {PMLR},
  url = 	 {https://proceedings.mlr.press/v119/guo20h.html},
}

@article{jgao_sigmod24,
author = {Gao, Jianyang and Long, Cheng},
title = {RaBitQ: Quantizing High-Dimensional Vectors with a Theoretical Error Bound for Approximate Nearest Neighbor Search},
year = {2024},
issue_date = {June 2024},
publisher = {Association for Computing Machinery},
address = {New York, NY, USA},
volume = {2},
number = {3},
url = {https://doi.org/10.1145/3654970},
doi = {10.1145/3654970},
journal = {Proc. ACM Manag. Data},
month = may,
articleno = {167},
numpages = {27}
}

@article{tge_tpami13,
author = {Ge, Tiezheng and He, Kaiming and Ke, Qifa and Sun, Jian},
title = {Optimized Product Quantization},
year = {2014},
issue_date = {April 2014},
publisher = {IEEE Computer Society},
address = {USA},
volume = {36},
number = {4},
issn = {0162-8828},
url = {https://doi.org/10.1109/TPAMI.2013.240},
doi = {10.1109/TPAMI.2013.240},
journal = {IEEE Trans. Pattern Anal. Mach. Intell.},
month = apr,
pages = {744–755},
numpages = {12}
}

@ARTICLE{mmuja_tpami14,
author={Muja, Marius and Lowe, David G.},
journal={ IEEE Transactions on Pattern Analysis \& Machine Intelligence },
title={{ Scalable Nearest Neighbor Algorithms for High Dimensional Data }},
year={2014},
volume={36},
number={11},
ISSN={1939-3539},
pages={2227-2240},
doi={10.1109/TPAMI.2014.2321376},
url = {https://doi.ieeecomputersociety.org/10.1109/TPAMI.2014.2321376},
publisher={IEEE Computer Society},
address={Los Alamitos, CA, USA},
month=nov}

@article{jbent_cacm75,
author = {Bentley, Jon Louis},
title = {Multidimensional binary search trees used for associative searching},
year = {1975},
issue_date = {Sept. 1975},
publisher = {Association for Computing Machinery},
address = {New York, NY, USA},
volume = {18},
number = {9},
issn = {0001-0782},
url = {https://doi.org/10.1145/361002.361007},
doi = {10.1145/361002.361007},
journal = {Commun. ACM},
month = sep,
pages = {509–517},
numpages = {9}
}

@inproceedings{aando_neurips15,
 author = {Andoni, Alexandr and Indyk, Piotr and Laarhoven, Thijs and Razenshteyn, Ilya and Schmidt, Ludwig},
 booktitle = {Advances in Neural Information Processing Systems},
 editor = {C. Cortes and N. Lawrence and D. Lee and M. Sugiyama and R. Garnett},
 pages = {},
 publisher = {Curran Associates, Inc.},
 title = {Practical and Optimal LSH for Angular Distance},
 url = {https://proceedings.neurips.cc/paper_files/paper/2015/file/2823f4797102ce1a1aec05359cc16dd9-Paper.pdf},
 volume = {28},
 year = {2015}
}

@inproceedings{qlv_vldb07,
author = {Lv, Qin and Josephson, William and Wang, Zhe and Charikar, Moses and Li, Kai},
title = {Multi-probe LSH: efficient indexing for high-dimensional similarity search},
year = {2007},
isbn = {9781595936493},
publisher = {VLDB Endowment},
booktitle = {Proceedings of the 33rd International Conference on Very Large Data Bases},
pages = {950–961},
numpages = {12},
location = {Vienna, Austria},
series = {VLDB '07}
}

@inproceedings{mdata_socg04,
author = {Datar, Mayur and Immorlica, Nicole and Indyk, Piotr and Mirrokni, Vahab S.},
title = {Locality-sensitive hashing scheme based on p-stable distributions},
year = {2004},
isbn = {1581138857},
publisher = {Association for Computing Machinery},
address = {New York, NY, USA},
url = {https://doi.org/10.1145/997817.997857},
doi = {10.1145/997817.997857},
booktitle = {Proceedings of the Twentieth Annual Symposium on Computational Geometry},
pages = {253–262},
numpages = {10},
location = {Brooklyn, New York, USA},
series = {SCG '04}
}

@inproceedings{pindy_stoc98,
author = {Indyk, Piotr and Motwani, Rajeev},
title = {Approximate nearest neighbors: towards removing the curse of dimensionality},
year = {1998},
isbn = {0897919629},
publisher = {Association for Computing Machinery},
address = {New York, NY, USA},
url = {https://doi.org/10.1145/276698.276876},
doi = {10.1145/276698.276876},
booktitle = {Proceedings of the Thirtieth Annual ACM Symposium on Theory of Computing},
pages = {604–613},
numpages = {10},
location = {Dallas, Texas, USA},
series = {STOC '98}
}

@article{mwang_vldb21,
author = {Wang, Mengzhao and Xu, Xiaoliang and Yue, Qiang and Wang, Yuxiang},
title = {A comprehensive survey and experimental comparison of graph-based approximate nearest neighbor search},
year = {2021},
issue_date = {July 2021},
publisher = {VLDB Endowment},
volume = {14},
number = {11},
issn = {2150-8097},
url = {https://doi.org/10.14778/3476249.3476255},
doi = {10.14778/3476249.3476255},
journal = {Proc. VLDB Endow.},
month = jul,
pages = {1964–1978},
numpages = {15}
}

@article{rqiu_sigmod25,
author = {Qiu, Runwen and Tang, Jing},
title = {Efficient Approximate Nearest Neighbor Search via Hemi-Sphere Centroids Graph},
year = {2025},
issue_date = {December 2025},
publisher = {Association for Computing Machinery},
address = {New York, NY, USA},
volume = {3},
number = {6},
url = {https://doi.org/10.1145/3769786},
doi = {10.1145/3769786},
journal = {Proc. ACM Manag. Data},
month = dec,
articleno = {321},
numpages = {26}
}

@article{xzhao_vldb23,
author = {Zhao, Xi and Tian, Yao and Huang, Kai and Zheng, Bolong and Zhou, Xiaofang},
title = {Towards Efficient Index Construction and Approximate Nearest Neighbor Search in High-Dimensional Spaces},
year = {2023},
issue_date = {April 2023},
publisher = {VLDB Endowment},
volume = {16},
number = {8},
issn = {2150-8097},
url = {https://doi.org/10.14778/3594512.3594527},
doi = {10.14778/3594512.3594527},
journal = {Proc. VLDB Endow.},
month = apr,
pages = {1979–1991},
numpages = {13}
}

@article{syang_vldb25,
author = {Yang, Shuo and Xie, Jiadong and Liu, Yingfan and Yu, Jeffrey Xu and Gao, Xiyue and Wang, Qianru and Peng, Yanguo and Cui, Jiangtao},
title = {Revisiting the Index Construction of Proximity Graph-Based Approximate Nearest Neighbor Search},
year = {2025},
issue_date = {February 2025},
publisher = {VLDB Endowment},
volume = {18},
number = {6},
issn = {2150-8097},
url = {https://doi.org/10.14778/3725688.3725709},
doi = {10.14778/3725688.3725709},
journal = {Proc. VLDB Endow.},
month = feb,
pages = {1825–1838},
numpages = {14}
}

@inproceedings{mmano_ppopp24,
author = {Manohar, Magdalen Dobson and Shen, Zheqi and Blelloch, Guy and Dhulipala, Laxman and Gu, Yan and Simhadri, Harsha Vardhan and Sun, Yihan},
title = {ParlayANN: Scalable and Deterministic Parallel Graph-Based Approximate Nearest Neighbor Search Algorithms},
year = {2024},
isbn = {9798400704352},
publisher = {Association for Computing Machinery},
address = {New York, NY, USA},
url = {https://doi.org/10.1145/3627535.3638475},
doi = {10.1145/3627535.3638475},
booktitle = {Proceedings of the 29th ACM SIGPLAN Annual Symposium on Principles and Practice of Parallel Programming},
pages = {270–285},
numpages = {16},
location = {Edinburgh, United Kingdom},
series = {PPoPP '24}
}

@inproceedings{nono_mm23,
author = {Ono, Naoki and Matsui, Yusuke},
title = {Relative NN-Descent: A Fast Index Construction for Graph-Based Approximate Nearest Neighbor Search},
year = {2023},
isbn = {9798400701085},
publisher = {Association for Computing Machinery},
address = {New York, NY, USA},
url = {https://doi.org/10.1145/3581783.3612290},
doi = {10.1145/3581783.3612290},
booktitle = {Proceedings of the 31st ACM International Conference on Multimedia},
pages = {1659–1667},
numpages = {9},
location = {Ottawa ON, Canada},
series = {MM '23}
}

@misc{cfu_arxiv16,
      title={EFANNA : An Extremely Fast Approximate Nearest Neighbor Search Algorithm Based on kNN Graph}, 
      author={Cong Fu and Deng Cai},
      year={2016},
      eprint={1609.07228},
      archivePrefix={arXiv},
      primaryClass={cs.CV},
      url={https://arxiv.org/abs/1609.07228}, 
}

@inproceedings{wdong_www11,
author = {Dong, Wei and Moses, Charikar and Li, Kai},
title = {Efficient k-nearest neighbor graph construction for generic similarity measures},
year = {2011},
isbn = {9781450306324},
publisher = {Association for Computing Machinery},
address = {New York, NY, USA},
url = {https://doi.org/10.1145/1963405.1963487},
doi = {10.1145/1963405.1963487},
booktitle = {Proceedings of the 20th International Conference on World Wide Web},
pages = {577–586},
numpages = {10},
location = {Hyderabad, India},
series = {WWW '11}
}

@article{jmuno_pr19,
author = {Vargas Mu\~{n}oz, Javier and Gon\c{c}alves, Marcos A. and Dias, Zanoni and da S. Torres, Ricardo},
title = {Hierarchical Clustering-Based Graphs for Large Scale Approximate Nearest Neighbor Search},
year = {2019},
issue_date = {Dec 2019},
publisher = {Elsevier Science Inc.},
address = {USA},
volume = {96},
number = {C},
issn = {0031-3203},
url = {https://doi.org/10.1016/j.patcog.2019.106970},
doi = {10.1016/j.patcog.2019.106970},
journal = {Pattern Recogn.},
month = dec,
numpages = {10}
}

@InProceedings{bharw_cvpr16,
author = {Harwood, Ben and Drummond, Tom},
title = {FANNG: Fast Approximate Nearest Neighbour Graphs},
booktitle = {Proceedings of the IEEE Conference on Computer Vision and Pattern Recognition (CVPR)},
month = {June},
year = {2016}
}

@article{cfu_vldb19,
author = {Fu, Cong and Xiang, Chao and Wang, Changxu and Cai, Deng},
title = {Fast approximate nearest neighbor search with the navigating spreading-out graph},
year = {2019},
issue_date = {January 2019},
publisher = {VLDB Endowment},
volume = {12},
number = {5},
issn = {2150-8097},
url = {https://doi.org/10.14778/3303753.3303754},
doi = {10.14778/3303753.3303754},
journal = {Proc. VLDB Endow.},
month = jan,
pages = {461–474},
numpages = {14}
}

@article{ymalk_is14,
title = {Approximate nearest neighbor algorithm based on navigable small world graphs},
journal = {Information Systems},
volume = {45},
pages = {61-68},
year = {2014},
issn = {0306-4379},
doi = {https://doi.org/10.1016/j.is.2013.10.006},
url = {https://www.sciencedirect.com/science/article/pii/S0306437913001300},
author = {Yury Malkov and Alexander Ponomarenko and Andrey Logvinov and Vladimir Krylov},
}

@ARTICLE{dpg_audio_notre,
  author={Li, Wen and Zhang, Ying and Sun, Yifang and Wang, Wei and Li, Mingjie and Zhang, Wenjie and Lin, Xuemin},
  journal={IEEE Transactions on Knowledge and Data Engineering}, 
  title={Approximate Nearest Neighbor Search on High Dimensional Data — Experiments, Analyses, and Improvement}, 
  year={2020},
  volume={32},
  number={8},
  pages={1475-1488},
  doi={10.1109/TKDE.2019.2909204}}

@inproceedings{msong,
author = {McFee, Brian and Bertin-Mahieux, Thierry and Ellis, Daniel P.W. and Lanckriet, Gert R.G.},
title = {The million song dataset challenge},
year = {2012},
isbn = {9781450312301},
publisher = {Association for Computing Machinery},
address = {New York, NY, USA},
url = {https://doi.org/10.1145/2187980.2188222},
doi = {10.1145/2187980.2188222},
booktitle = {Proceedings of the 21st International Conference on World Wide Web},
pages = {909–916},
numpages = {8},
location = {Lyon, France},
series = {WWW '12 Companion}
}

@INPROCEEDINGS{deep,
  author={Yandex, Artem Babenko and Lempitsky, Victor},
  booktitle={2016 IEEE Conference on Computer Vision and Pattern Recognition (CVPR)}, 
  title={Efficient Indexing of Billion-Scale Datasets of Deep Descriptors}, 
  year={2016},
  volume={},
  number={},
  pages={2055-2063},
  doi={10.1109/CVPR.2016.226}}

@ARTICLE{sift,
  author={Jégou, Herve and Douze, Matthijs and Schmid, Cordelia},
  journal={IEEE Transactions on Pattern Analysis and Machine Intelligence}, 
  title={Product Quantization for Nearest Neighbor Search}, 
  year={2011},
  volume={33},
  number={1},
  pages={117-128},
  doi={10.1109/TPAMI.2010.57}}

@article{dzhao_tpami09,
author = {Zhao, Dongfang and Yang, Li},
title = {Incremental Isometric Embedding of High-Dimensional Data Using Connected Neighborhood Graphs},
year = {2009},
issue_date = {January 2009},
publisher = {IEEE Computer Society},
address = {USA},
volume = {31},
number = {1},
issn = {0162-8828},
url = {https://doi.org/10.1109/TPAMI.2008.34},
doi = {10.1109/TPAMI.2008.34},
journal = {IEEE Trans. Pattern Anal. Mach. Intell.},
month = jan,
pages = {86–98},
numpages = {13}
}

@article{digra,
author = {Jiang, Mengxu and Yang, Zhi and Zhang, Fangyuan and Hou, Guanhao and Shi, Jieming and Zhou, Wenchao and Li, Feifei and Wang, Sibo},
title = {DIGRA: A Dynamic Graph Indexing for Approximate Nearest Neighbor Search with Range Filter},
year = {2025},
issue_date = {June 2025},
publisher = {Association for Computing Machinery},
address = {New York, NY, USA},
volume = {3},
number = {3},
url = {https://doi.org/10.1145/3725399},
doi = {10.1145/3725399},
journal = {Proc. ACM Manag. Data},
month = jun,
articleno = {148},
numpages = {26}
}

@INPROCEEDINGS {cagra,
author = { Ootomo, Hiroyuki and Naruse, Akira and Nolet, Corey and Wang, Ray and Feher, Tamas and Wang, Yong },
booktitle = { 2024 IEEE 40th International Conference on Data Engineering (ICDE) },
title = {{ CAGRA: Highly Parallel Graph Construction and Approximate Nearest Neighbor Search for GPUs }},
year = {2024},
volume = {},
ISSN = {},
pages = {4236-4247},
doi = {10.1109/ICDE60146.2024.00323},
url = {https://doi.ieeecomputersociety.org/10.1109/ICDE60146.2024.00323},
publisher = {IEEE Computer Society},
address = {Los Alamitos, CA, USA},
month =May}

@article{hnsw,
author = {Malkov, Yu A. and Yashunin, D. A.},
title = {Efficient and Robust Approximate Nearest Neighbor Search Using Hierarchical Navigable Small World Graphs},
year = {2020},
issue_date = {April 2020},
publisher = {IEEE Computer Society},
address = {USA},
volume = {42},
number = {4},
issn = {0162-8828},
url = {https://doi.org/10.1109/TPAMI.2018.2889473},
doi = {10.1109/TPAMI.2018.2889473},
journal = {IEEE Trans. Pattern Anal. Mach. Intell.},
month = apr,
pages = {824–836},
numpages = {13}
}

@inproceedings{diskann,
 author = {Jayaram Subramanya, Suhas and Devvrit, Fnu and Simhadri, Harsha Vardhan and Krishnawamy, Ravishankar and Kadekodi, Rohan},
 booktitle = {Advances in Neural Information Processing Systems},
 editor = {H. Wallach and H. Larochelle and A. Beygelzimer and F. d\textquotesingle Alch\'{e}-Buc and E. Fox and R. Garnett},
 pages = {},
 publisher = {Curran Associates, Inc.},
 title = {DiskANN: Fast Accurate Billion-point Nearest Neighbor Search on a Single Node},
 url = {https://proceedings.neurips.cc/paper_files/paper/2019/file/09853c7fb1d3f8ee67a61b6bf4a7f8e6-Paper.pdf},
 volume = {32},
 year = {2019}
}

\end{document}